\pgfplotsset{width=12.5cm, height=8cm}
\newlength{\codeLineLength}
\newcommand{\transform}{\text{\em transform\/}\xspace}
\newcommand{\init}{\text{\em init\/}\xspace}
\newcommand{\cskw}{\text{\bf \{\/}\xspace}
\newcommand{\cekw}{\text{\bf \}\/}\xspace}
\newcommand{\qkw}{\text{\bf ?\/}\xspace}
\newcommand{\ckw}{\text{\bf :\/}\xspace}
\newcommand{\ifkw}{\text{\bf if\/}\xspace}
\newcommand{\forkw}{\text{\bf for\/}\xspace}
\newcommand{\breakkw}{\text{\bf break\/}\xspace}
\newcommand{\continuekw}{\text{\bf continue\/}\xspace}
\newcommand{\assertkw}{\text{\bf assert\/}\xspace}
\newcommand{\elsekw}{\text{\bf else\/}\xspace}
\newcommand{\arrvar}{\text{$a$}\xspace}
\newcommand{\absind}{\text{$i_a$}\xspace}
\newcommand{\absarr}{\text{$x_a$}\xspace}
\newcommand{\ms}{\text{$\sigma$}\xspace}
\newcommand{\msl}{\text{$\sigma_l$}\xspace}
\newcommand{\msm}{\text{$\sigma_m$}\xspace}
\newcommand{\msb}{\text{$\sigma_b$}\xspace}
\newcommand{\msA}{\text{$\sigma_a$}\xspace}
\newcommand{\expr}{\text{$e$}\xspace}
\newcommand{\sexpr}{\text{$\mathbbm{E}$}\xspace}
\newcommand{\var}{\text{$x$}\xspace}
\newcommand{\size}{\text{\sf\em lastof}\xspace}
\newcommand{\loc}{\text{$\ell$}\xspace}
\newcommand{\svar}{\text{$\mathbb{V}$}\xspace}
\newcommand{\earr}{\text{$\mathbb{E}_A$}\xspace}
\newcommand{\eval}[2]{\text{$\left\llbracket #1 \right\rrbracket_{#2}$}\xspace}
\newcommand{\sval}{\text{$\mathbb{C}$}\xspace}
\newcommand{\vexpr}{\text{\eval{e}{\ms}}\xspace}
\newcommand{\asvexpr}{\text{\eval{e}{\smsa}}\xspace}
\newcommand{\pc}{\text{$P$}\xspace}
\newcommand{\pa}{\text{$P'$}\xspace}
\newcommand{\msa}{\text{$\sigma'$}\xspace}
\newcommand{\smsa}{\text{$\Sigma'$}\xspace}
\newcommand{\smsalp}{\text{$\Sigma'\!\!_{l'}$}\xspace}
\newcommand{\smsamp}{\text{$\Sigma'\!\!_{m'}$}\xspace}
\newcommand{\msalp}{\text{$\sigma'\!\!_{l'}$}\xspace}
\newcommand{\msamp}{\text{$\sigma'\!\!_{m'}$}\xspace}
\newcommand{\msaap}{\text{$\sigma'\!\!_{a'}$}\xspace}
\newcommand{\msabp}{\text{$\sigma'\!\!_{b'}$}\xspace}
\renewcommand{\int}{\text{\sf I}\xspace}
\newcommand{\smem}{\text{$\mathbb{M}$}\xspace}
\newcommand{\smema}{\text{$\mathbb{M'}$}\xspace}
\newcommand{\assert}{\text{$A_n$}\xspace}
\newcommand{\assertinv}{\text{$A\!_n^{inv}$}\xspace}
\newcommand{\emit}{\text{\sf\em emit}\xspace}
\newcommand{\fullarrayaccess}{\text{\sf\em fullarrayaccess}\xspace}
\newcommand{\loopbound}{\text{\sf\em loopbound}\xspace}
\newcommand{\loopdefs}{\text{\sf\em loopdefs}\xspace}
\newcommand{\T}{\text{\sf\em transform}\xspace}
\newcommand{\Prog}{\text{\sf\em P}\xspace}
\newcommand{\Stmt}{\text{\sf\em S}\xspace}
\newcommand{\Expr}{\text{\sf\em E}\xspace}
\newcommand{\Lvalue}{\text{\sf\em L}\xspace}
\newcommand{\Init}{\text{\sf\em I}\xspace}
\begin{document}


\title{Scaling Bounded Model Checking By Transforming Programs With Arrays}

\author{Anushri Jana \inst{1} \and Uday P. Khedker\inst{2} \and Advaita Datar \inst{1} \and R. Venkatesh\inst{1} \and Niyas C \inst{1}
}
\institute{ Tata Research Development and Design Centre, Pune, India\\
\email{\{anushri.jana,advaita.datar,r.venky,niyas.c\}@tcs.com}
\and Indian Institute of Technology Bombay, India. \\
\email{uday@cse.iitb.ac.in}
}

\maketitle
\begin{abstract}
Bounded Model Checking is one the most successful techniques for finding bugs in program. 
However, model checkers are resource hungry and are often unable to verify
 programs with loops iterating over large
arrays. We present a transformation that enables bounded model checkers to
verify a certain class of array properties. Our technique transforms an
array-manipulating (\textsc{Ansi}-C) program to an array-free and loop-free (\textsc{Ansi}-C)
program thereby reducing the resource requirements of a model checker significantly. 
Model checking of the transformed program using an off-the-shelf bounded model checker
simulates the loop iterations efficiently.
Thus, our transformed program is a sound abstraction of the original program and
is also precise in a large number of cases---we formally characterize the 
class of programs for which it is guaranteed to be precise.
 We demonstrate the
applicability and usefulness of our technique on both industry code as well as
academic benchmarks.
\end{abstract}

\begin{keywords} Program Transformation, Bounded Model Checking, Array, Verification. 
\end{keywords}

\section{Introduction} 
\label{introduction} 

Bounded Model Checking is one of the most successful techniques for
finding bugs~\cite{copty2001benefits} as evidenced by success 
achieved by the tools implementing this technique in verification
competitions~\cite{SVCOMP15,SVCOMP16}. Given a program P
and a property $\varphi$, Bounded Model Checkers (BMCs) unroll the loops in P, a fixed
number of times and search for violations to $\varphi$ in the unrolled program.
However, for programs with loops of large or unknown bounds, bounded model
checking instances often exceed the limits of resources available. In our
experience, programs manipulating large arrays invariably have such loops
iterating over indices of the array. Consequently, BMCs
routinely face the issue of scalability in proving properties on arrays. The
situation is no different even when the property is an {\it array invariant}
i.e., it holds for every element of the array---a characteristic which can
potentially be exploited for efficient bounded model checking.

Consider the example in Figure~\ref{Mex1} manipulating an array of structures
$a$. The structure has two fields, $p$ and $q$, whose values are assigned in
the first {\it for} loop (lines 8--13) such that $a[i].q$ is the square of
$a[i].p$ for every index $i$. The second {\it for} loop (lines 14--17) asserts
that this property indeed holds for each element in $a$. This is a {\it
safe} program i.e., the assertion does not admit a counterexample.
CBMC~\cite{cbmc}, a bounded model checker for C, tries to unwind the first
loop 100000 times and runs out of memory before reaching
the loop with the assertion. We tried this example with several 
model checkers\footnote{Result for \emph{motivatingExample.c} at \htmladdnormallink{https://sites.google.com/site/datastructureabstraction/}{https://sites.google.com/site/datastructureabstraction/}} 
	and none of them
was able to prove this property because of a large loop bound.

\begin{figure*}[t]
\scriptsize
\begin{minipage}{0.45\textwidth}
\begin{lstlisting}[frame=single,language=c,escapeinside={(*@}{@*)},basicstyle=\ttfamily]
1. struct S {
2.    unsigned int p;		  
3.    unsigned int q;		 		  
4. } a[100000];
5. int i,k;

06. main()
07. {
08.  for(i=0; i<100000; i++)
09.  {
10.   k = i; 
11.   a[i].p = k;
12.   a[i].q = k * k ;
13.  }

14.  for (i=0; i<100000; i++)
15.  {
16.   (*@ \color[rgb]{0.75,0.164,0.164}{assert}@*)(a[i].q == 
          a[i].p * a[i].p);
17.  }
18. }
\end{lstlisting}
\caption{Motivating Example}
\label{Mex1}
\end{minipage}
\quad
\quad
\begin{minipage}{0.50\textwidth}
\begin{lstlisting}[frame=single,language=c,escapeinside={(*@}{@*)},basicstyle=\ttfamily]
1. struct S{
2.    unsigned int p;		  
3.    unsigned int q;		 		  
4. }x_a;
5. int i_a;
6. int i,k;
7. main()
8. {
9.   i_a = nd(0,99999);

//first loop body
10.  k = nd(0,100000);
11.  i = i_a;
12.  k = i;
13.  (i == i_a)? x_a.p = k : k;
14.  (i == i_a)? x_a.q = k * k : k*k ;
15.  k = nd(0,100000);
  
//second loop body
16.  i = i_a;
17.  (*@ \color[rgb]{0.75,0.164,0.164}{assert}@*)(((i==i_a)?x_a.q:nd())
        ==((i==i_a)?x_a.p:nd())
        *((i==i_a)?x_a.p:nd()));
18. }

\end{lstlisting} 
\caption{Transformed Code}
\label{TEx1}
\end{minipage}
\end{figure*}

One of the ways of proving this example safe is to show that the property holds
for any arbitrary element of the array, say at index $i_c$. This allows us to
get rid of those parts of the program that do not update $a[i_c]$ which, in
turn, eliminates the loop iterating over all the array indices. This enables
CBMC to verify the assertion without getting stuck in the loop unrolling.
Moreover, since $i_c$ is chosen nondeterministically from the indices of $a$,
the property holds for every array element without loss of generality.

This paper presents the transformation sketched above with the aim that the
transformed program is easier for a BMC to verify as compared
to the original program. The transformation
is over-approximative i.e.,  it give more values than that by the original program.
This ensures that if the original program is safe with respect to the chosen
property, so is the transformed program. However, the over-approximation
raises two important questions spanning practical and intellectual considerations:
\begin{enumerate}
\item {\it Is the proposed approach practically useful? Does the
transformation enable a BMC to verify real-world programs or
academic benchmarks fairly often?}

We answer this through an extensive experimental evaluation over
industry code as well as 
examples in the array category of SV-COMP
2016 benchmarks. 
Our approach helps CBMC to scale in each case.
We further demonstrate the applicability of
our technique to successfully identify a large number of false warnings (on an average 73\%)
reported by a static analyzer on arrays in large programs.

\item {\it Is it possible to characterize a class of properties for which it is
precise?}

In order to address this we provide a formal characterization of properties for
which the transformation is precise i.e., we state criteria under which
the transformed program is unsafe only when the original program is unsafe
(Section~\ref{sec::prec}).
\end{enumerate}

To summarize, this paper makes the following contributions:
\begin{itemize}
\item A new technique combining static analysis and model checking enabling verification of array invariant
	properties in programs with loops iterating over large arrays.
\item A novel concept of using pair of a \emph{witness variable} and a \emph{witness index} which allows us to remove the
loops and arrays from programs and simulate the iterations accessing different elements of arrays during model checking.
\item A formal characterization of properties for which the transformation is
precise.
\item A transformation engine implementing the technique.
\item An extensive experimental evaluation showing the applicability of our
technique to real-world programs as well as academic benchmarks.
\end{itemize}
		
The rest of the paper starts with an informal description of the transformation
(Section~\ref{desc}) before we define the semantics
(Section~\ref{sec:dsa}) and formally state the transformations rules
(Section~\ref{transformation}).  Section~\ref{sec:proof} and ~\ref{sec::prec},
respectively, describe the soundness and precision of our approach.  Section~\ref{exp}
presents the experimental setup and results. We discuss the related work in
Section~\ref{relwork} before concluding in Section~\ref{conclusion}.

\section{Informal Description}
\label{desc}

Given a program \pc containing loops iterating over an array \arrvar, we transform it 
to a program \pa that has a pair \text{$\langle \absarr, \absind \rangle$} of a
\emph{witness variable} and a \emph{witness index} for the array and its index
such that
$\absarr$ represents the element $a[i_a]$ of the original program. Further, loops are replaced by 
their customized bodies that operate only on the witness variable $\absarr$ instead of all elements of the array $a$.

To understand the intuition behind our transformation, consider a trace $t$ of \pc ending on the assertion
\assert. Consider the last occurrence of a statement \text{$s: a[e_1]=e_2$} in $t$. 
We wish to transform \pc such that
there exists a trace $t'$ of \pa ending on \assert in which the value of \absind is equal to that of $e_1$
and the value of \absarr\ is equal to that of $e_2$.  
%
%
We achieve this by transforming the program such that: 
\begin{itemize}
	\item \absind gets a non-deterministically chosen value at the start of the program (this facilitates an arbitrary choice of array element $a[\absind]$).
    	\item array writes and reads for $a[\absind]$ are replaced by the witness variable \absarr.   
	\item array writes other than $a[\absind]$ are eliminated and reads are replaced by a non-deterministically chosen 
               value.
       \item loop body is executed only once either unconditionally or non-deterministically; based on loop characteristics. 
       During the execution of the transformed loop body, 
	       \begin{itemize}
		\item the loop iterator variable either gets the value of \absind or a non-deterministically chosen value (depending on loop characteristics), and
		\item all other scalar variables whose values may be different in different iterations 
              		gets non-deterministically chosen values.
	\end{itemize}
\end{itemize}

Figure~\ref{TEx1} shows the transformed program \pa for the program \pc of Figure~\ref{Mex1}. 
Function \texttt{nd(l,u)} returns a non-deterministically chosen value in the range $[l..u]$.
In \pa, the witness index $i\_a$ for array $a$ is globally assigned a 
non-deterministically chosen value within the range of array size (at line 9).
In a run of BMC, the assertion is checked for this non-deterministically chosen element $a[\absind]$.
To ensure that the values for the same index $a[i_a]$ are written and read, we replace the
array accesses by the witness variable \emph{\text{$x\_a$}} only when the value of the index $i$ matches with that of $i\_a$ (lines 13, 14 and 17).
We remove the loop header but retain the loop body. 
To over-approximate the effect of the removal of loop iterations, we add non-deterministic assignments to all variables modified in the loop body, 
at the start of the transformed loop body and also
after the transformed loop body (lines 11 and 15). 
Note that we retain the original assignment statements too (line 12).
Since the loops at line 8 and line 14 in the original program iterate over the entire array, we equate loop iterator variable $i$ to
$i\_a$ (line 11 and 16) and the transformed loop bodies 
(lines 10--14 and lines 16--17) are executed unconditionally. 

 We are using a single variable, \emph{\text{$x\_a$}}, to represent the array \emph{a}. 
	However, $x\_a$ takes different values
	in different runs of the BMC based on an arbitrarily chosen value
	of $i\_a$ and \emph{k}. 
	Our technique is able to verify the program in Figure~\ref{Mex1} because we do not conflate the values of expressions 
	across different runs as is done in any static analysis. 
	In a sense, we go beyond
	a static analysis and use a dynamic analysis through a BMC to verify the property.

We explain the transformation rules formally in Section~\ref{transformation}.
The transformed program can be verified by an off-the-shelf BMC. 
Note that each index will be considered in some run of the BMC since $i\_a$ is chosen non-deterministically.
Hence, if an assertion fails for any index in the original program, it fails in the transformed program too.

\section{Semantics}
\label{sec:dsa}
In this section we formalize our technique
by explaining the language and defining a representation of states. 

\subsection{Language}
We formulate our analysis over a language modelled on C.
For simplicity of exposition we restrict our description to a subset of C which includes structures and 1-dimensional arrays.
For a given program,
let \sval, \svar, and \sexpr be the sets of values computed by the program, the variables appearing in the program, and the expressions
appearing in the program, respectively.
A value \text{$c \in \sval$} can be an integer, a floating-point or a boolean value.
A variable \text{$v \in \svar$}
can be a scalar variable, a structure variable, or an array variable. 
We define our program to have only one array variable denoted as \arrvar. 
However, our implementation handles multiple arrays as explained in 
our technical report~\cite{Jana2016Scaling}. 
We also define \text{$\earr \subseteq \sexpr$} 
as the set of array expressions of the form \text{\arrvar[\Expr]}.
An lval \Lvalue can be an array access expression or a variable. 
Let \text{$c \in \sval$}, 
\text{$\var,i \in (\svar - \{\arrvar\})$}.
We consider assignment statements, conditional statements, loop statements, and assertion statements defined by the following grammar. 
We define the grammar of our language using the following non-terminals:
Program \Prog consists of statements \Stmt which may use 
lvalues \Lvalue and expressions \Expr. We assume that programs are type correct as per C typing rules.
\begin{equation}
\renewcommand{\arraystretch}{1.2}
\begin{array}{rl}
\Prog \rightarrow & \; \,
	\Stmt
	\\
\Stmt  \rightarrow & \; \,
	\ifkw\; (\Expr) \; \Stmt \; \elsekw \; \Stmt 
	\;\; \big\lvert \;\; \ifkw\; (\Expr) \; \Stmt 
	\;\; \big\lvert \;\; \forkw \; (i = \Expr \,;\; \Expr \,;\; \Expr) \;\Stmt 
	\;\; \big\lvert 
	\\
	&
	\;\; \Stmt \; ; \; \Stmt 
	\;\; \big\lvert \;\; \Lvalue = \Expr 
	\;\; \big\lvert \;\; \assertkw(\Expr) 
	\\
\Lvalue \rightarrow & \;\,
	a[\Expr] 
	\;\; \big\lvert \;\; \var
	\\
		\label{eq:grammar.org}
\Expr \rightarrow & \; \,
	\Expr \oplus \Expr
	\;\; \big\lvert \;\; \Lvalue
	\;\; \big\lvert \;\; c
\end{array}
\end{equation}


In practice, we analyze \textsc{Ansi}-C language programs that include functions, pointers, composite data-structures, 
all kinds of definitions, and all control structures except multi-dimensional arrays.

\subsection{Representing Program States}
We define program states in terms of memory locations and the values stored in memory locations.
We distinguish between \emph{atomic} variables (such as
scalar and structure variables) whose values can be copied atomically using a single assignment operation, from
non-atomic variables such as arrays. Since we are considering 1-dimensional arrays, the
array elements are atomic locations. 

Function \text{$\loc(a[i])$} returns the memory location corresponding to the $i^{th}$ index of array $a$.
The memory of an input program consists of all atomic locations:
\begin{align}
	\smem & = (\svar - \{\arrvar\}) \cup \big\{ \loc(\arrvar[i]) \; \big\lvert \; 0 \leq i \leq \size(\arrvar) \big\}
\label{equation:mem.org}
\end{align}
The function \size(\arrvar) returns the highest index value for array \arrvar.

A \emph{program state} is a map $\ms : \smem \to \sval$. 
\vexpr denotes the value of expression \expr in the program state \ms. 

We transform a program by creating a pair \text{$\langle \absind,\absarr \rangle$} for array \arrvar where \absind is the witness index 
and \absarr is the witness variable.
The memory of a transformed program with additional variables is:
\begin{align}
	\smema & = (\svar - \{\arrvar\}) \cup \{\absarr\} \cup \; \{\absind\}
\label{equation:mem.trans}
\end{align}

For a transformed program, a program state is denoted by \msa and is defined over \smema. 
A set of states is denoted by \smsa and set of states at program point $l$ as 
$\smsa_l$. \asvexpr denotes the set of value of expression \expr in the set of program states \smsa.

We illustrate the states in the original and the transformed program through an example.
Let a program \pc have an array variable $a$ and a variable $k$ holding the size of the array $a$. Let
the array contain the values \text{$c_i \in \sval$}, \text{$0 \leq i < n$}, where $n \in \sval$ is the value of size of the array.
Then, a program state, $\ms$ at any program point \emph{l} can be: 
\begin{align}
\ms &= 
	\big\{  \big(k,n\big), 
         \big(\loc \left(a[0]\right),c_0\big), 
         \big(\loc \left(a[1]\right),c_1\big),
         \ldots, 
         \big(\loc \left(a[{n-1}]\right),c_{n-1} \big)
 \big\}
	\label{eq:state.p}
\end{align}

In the transformed program \pa, let \absarr and \absind be the witness variable and witness index respectively. 
Let $l'$ be the program point in \pa that corresponds to $l$ in \pc.
Then, all possible states in the transformed program at $l'$ are,

\begin{align}
\msa\!\!_{0} & = \{(k,n),(i_a,0),(x_a,c_{0})\}
	\nonumber
\\
\msa\!\!_{1} & = \{(k,n),(i_a,1),(x_a,c_{1})\}
	\nonumber
\\
\ldots
	\nonumber
\\
\msa\!\!_{{n-1}} & = \{(k,n),(i_a,{n-1}),(x_a,c_{{n-1}})\}
	\nonumber
\\
\smsa & = \{ \msa\!\!_{0},\msa\!\!_{1} \ldots \msa\!\!_{{n-1}}\}
	\nonumber	
\end{align}


We now formally define how the states at a program point in the transformed program represents a state at the
corresponding program point in the original program. 

\begin{definition}
\label{def:abselement1}
Let \ms be a state
at a program point in \pc and let \msa be a state at the corresponding program point in \pa.
Then, for any index $c_{1}$ under consideration, \msa represents \ms, denoted as \text{$\msa \rightsquigarrow \ms$}, if \\
$\big(\loc(a[c_1]),c_2\big) \in \ms$ ~~$\Rightarrow$~~
$\msa = \left\{
		(\absind,c_1), 
  		(\absarr,c_2) 
	\right\} 
	\;\cup\;
	\left\{
		(y,c) \mid (y,c) \in \ms, 
   		y \in (\svar-\{\arrvar\})
	\right\}$
\end{definition}

\begin{definition}
\label{def:abselement2}
Let \ms be a state
at a program point in \pc and let \smsa be set of states at the corresponding program point in \pa.
Then, \smsa represents \ms, denoted as \text{$\smsa \rightsquigarrow \ms$}, if 

$\forall c_{i}$ such that $\big(\loc(a[c_i]),c_j\big) \in \ms$, $\exists \msa \in \smsa$ such that $\msa \rightsquigarrow \ms$.
\end{definition}

    



Let \assert be the assertion at line n in program P. Let \ms be a state reaching \assert in the original program with pair
\text{$\left(\loc\left(a\left[\eval{e_1}{\ms}\right]\right),\eval{e_2}{\ms}\right)$}. 
Let \msa be the state in 
transformed program, \msa represents \ms. Thus, \msa has two pairs,  
\text{$\left(\absind, \eval{e_3}{\msa}\right)$} and
\text{$\left(\absarr,\eval{e_4}{\msa}\right)$}
such that
\text{$\eval{e_3}{\msa} = \eval{e_1}{\ms}$} and \text{$\eval{e_4}{\msa} = \eval{e_2}{\ms}$} .
Hence, if the assertion \assert holds in transformed program, it holds in the original program too.



\begin{figure}[!t]
\addtocounter{figure}{1}
\newcommand{\DEFRule}{%
\arrayrulecolor{lightgray}\hline\arrayrulecolor{black}
}
\begin{align*}
\T(\Expr)  & = 
	\\
	&\hspace*{-10mm}
		\Expr \equiv (\Expr_1 \oplus \Expr_2)
		\hspace*{-5mm}&\Rightarrow\hspace*{2mm}
		&
		\emit\left(\T(\Expr_1) \; \oplus \; \T(\Expr_2) \right)
		\tag{\thefigure.$E_1$}\label{transform.expr.1}
		\\
	&\hspace*{-10mm}
		\Expr \in \earr, \;
		\Expr \equiv \arrvar[\Expr_1] \;
		\hspace*{-5mm}&\Rightarrow\hspace*{2mm}
		&
		\emit\left(\left(\Expr_1 ==\absind\right)  \qkw \; \absarr \; \ckw \; nd() \right)
		\tag{\thefigure.$E_2$}\label{transform.expr.2}
		\\
	&\hspace*{-10mm}
		\text{otherwise}
		\hspace*{-5mm}&\Rightarrow\hspace*{2mm}
		& \; \emit\left(\Expr   \right)
		\tag{\thefigure.$E_3$}\label{transform.expr.3}
	\rule[-.5em]{0em}{1em}
	\\ 
	\hline
	\rule{0em}{1em}
\T(\Stmt)  & = 
	\\
	&\hspace*{-10mm}
		\Stmt \equiv (\Lvalue=\Expr), \; 
		\Lvalue \equiv \arrvar[\Expr_1] \;
		\hspace*{-5mm}&\Rightarrow\hspace*{2mm}
		&
		\emit\big(\left(\Expr_1 ==\absind\right) \qkw \; 
			\\
	& & & \phantom{\emit\big(} \absarr 
			=  \T(\Expr) \; \ckw \; \T(\Expr) \big)
		\tag{\thefigure.$S_1$}\label{transform.stmt.1}
		\\
	&\hspace*{-10mm}
		\Stmt \equiv (\Lvalue=\Expr),
		\Lvalue \not\equiv a[\Expr_1] \;
		\hspace*{-5mm}&\Rightarrow\hspace*{2mm}
		& \;\emit\left(\Lvalue=\T(\Expr)\right)
		\tag{\thefigure.$S_2$}\label{transform.stmt.2}
		\\
	&\hspace*{-10mm}
		\begin{array}[t]{@{}l@{\,}l}
		\Stmt \equiv & (\forkw(i = \Expr_1;\;\Expr_2;\;\Expr_3) \; \Stmt_1), 
		\\
		& 
		\fullarrayaccess(\Stmt), 
		\\
		& u \in \loopdefs(\Stmt_1)
		\end{array}
		\hspace*{-4.5mm}&\Rightarrow\hspace*{2mm}
		&
		\;\begin{array}[t]{@{}l@{\,}l}
		\emit\;\big( &  u=nd();
			\hspace*{8mm} // \forall u \in \loopdefs(\Stmt_1) \\
		& i=\absind;\\ 
			 & \T(\Stmt_1);\\  
			& u=nd(); 
			\hspace*{8mm} // \forall u \in \loopdefs(\Stmt_1) \\ 
		\phantom{\emit}	\; \big)
		\end{array}
		\tag{\thefigure.$S_3$}\label{transform.stmt.3}
		\\
	&\hspace*{-10mm}
		\begin{array}[t]{@{}l@{\,}l}
		\Stmt \equiv & (\forkw(i = \Expr_1;\;\Expr_2;\;\Expr_3) \; \Stmt_1),  
		\\
		& 
		\neg\fullarrayaccess(\Stmt),
		\\
		& u \in \loopdefs(\Stmt_1)
		\end{array}
		\hspace*{-5mm}&\Rightarrow\hspace*{2mm}
		&
		\;\begin{array}[t]{@{}l@{\,}l}
		\emit\;\big( & \ifkw(nd(0,1))  \\ 
		& \cskw  \; \; u=nd();
		 	\hspace*{5mm} // \forall u \in \loopdefs(\Stmt_1) \\ 
		& \;\; \;  i=nd(\loopbound(\Stmt));\\
			& \;\;\; \T(\Stmt_1); \\ 
		& \cekw \\ 
		& \; u=nd();
			\hspace*{7mm} // \forall u \in \loopdefs(\Stmt_1) \\ 
		 \phantom{\emit}\;\big)
		\end{array}
		\tag{\thefigure.$S_4$}\label{transform.stmt.4}
		\\
	&\hspace*{-10mm}
		\Stmt \equiv (\ifkw(\Expr) \; \Stmt_1 \; \elsekw \; \Stmt_2)
		\hspace*{-5mm}&\Rightarrow\hspace*{2mm}
		& \;\emit\big(\ifkw(\T(\Expr)) 
			\\
		& & & \phantom{\emit\big(} \; \T(\Stmt_1) \; \elsekw \; \T(\Stmt_2) \big)
		\tag{\thefigure.$S_5$}\label{transform.stmt.5}
		\\
	&\hspace*{-10mm}
		\Stmt \equiv (\ifkw(\Expr) \; \Stmt_1 )
		\hspace*{-5mm}&\Rightarrow\hspace*{2mm}
		& \;\emit\left(\ifkw(\T(\Expr)) \; \T(\Stmt_1)  \right)
		\tag{\thefigure.$S_6$}\label{transform.stmt.6}
		\\
	 &\hspace*{-10mm}
		\Stmt \equiv (\Stmt_1;\Stmt_2)
		\hspace*{-5mm}&\Rightarrow\hspace*{2mm}
		& \;\emit\left(\T(\Stmt_1);\T(\Stmt_2) \right)
		\tag{\thefigure.$S_7$}\label{transform.stmt.7}
		\\
	&\hspace*{-10mm}
		\Stmt \equiv (\assertkw(\Expr))
		\hspace*{-5mm}&\Rightarrow\hspace*{2mm}
		& \;\emit\left(\assertkw(\T(\Expr)) \right)
		\tag{\thefigure.$S_8$}\label{transform.stmt.8}
		\\
	&\hspace*{-10mm}
		\text{otherwise}
		\hspace*{-5mm}&\Rightarrow\hspace*{2mm}
		& \; \emit\left(\Stmt\right)
		\tag{\thefigure.$S_9$}\label{transform.stmt.9}
	\rule[-.5em]{0em}{1em}
	\\ 
	\hline
	\rule{0em}{1em}
\T(\Prog)  & = 
	\\
	&\hspace*{-10mm}
		\Prog \equiv \Stmt 
		\hspace*{-5mm}&\Rightarrow\hspace*{2mm}
		& 
		\;\begin{array}[t]{@{}l@{\,}l}
		\emit\;\big( & 
			\absind = nd\left(\size\left(a\right)\right);
				\\
			& \T(\Stmt)
			\\
		 \phantom{\emit}\;\big)
		\end{array}
		\tag{\thefigure.$P$}\label{transform.prog.1}
\end{align*}



\addtocounter{figure}{-1}
\caption{Program transformation rules. Non-terminals \Prog, \Stmt, \Expr, \Lvalue represent the code fragment in the input program
derivable from them.}
\label{fig:transform}
\end{figure}

\section{Transformation}
\label{transformation}

The transformation rules are given in Figure~\ref{fig:transform}. A transformed program
satisfies the following grammar 
derived from that of the original program (Grammar \ref{eq:grammar.org}).
Let \text{$x,\absarr,\absind \in \svar$} denote a scalar variable, the witness variable, and the witness index, respectively.
Let \text{$c,l,u \in \sval$} denote the values. Then,
\begin{equation}
\renewcommand{\arraystretch}{1.2}
\begin{array}{rl}
\Prog \rightarrow & \; \,
	\Init \;; \; \Stmt
	\\
\Init \rightarrow & \; \,
	\absind = nd (l,u) 
	\\
\Stmt  \rightarrow & \; \,
	\ifkw\; (\Expr) \; \Stmt \; \elsekw \; \Stmt 
	\;\; \big\lvert \;\; \ifkw\; (\Expr) \; \Stmt 
	\;\; \big\lvert \;\; \Stmt \; ; \; \Stmt 
	\;\; \big\lvert \;\; 
	\Lvalue = \Expr 
	\;\; \big\lvert \;\; \assertkw(\Expr) 
	\\
\Lvalue \rightarrow & \;\,
	\var \;\; \big\lvert \;\; \absarr \;\; \big\lvert \;\; \absind
	\\
\Expr \rightarrow & \; \,
	\Expr \oplus \Expr
	\;\; \big\lvert \;\; \Lvalue
	\;\; \big\lvert \;\; c
	\;\; \big\lvert \;\; nd()
	\;\; \big\lvert \;\; nd(l,u)
\end{array}
\label{eq:grammar.trans}
\end{equation}
The non-terminal \Init represents the initialization statements for witness index.
Witness variable is initialized in the same scope as that in the original program.

Following are the functions used in the transformation rules.
\begin{compactitem}
\item Function $nd(l,u)$ returns a non-deterministically chosen value 
	between the lower 
      limit $l$ and the upper limit $u$. When a range is not provided, $nd$ 
	returns a non-deterministically chosen value based on the type of \Lvalue. 
\item Function \T transforms the 
      code represented by its argument non-terminal. Function \emit shows the actual
      code that would be emitted. 
      We assume that it takes the code emitted by \T and possibly some additional statements and 
	outputs the combined code.
      It has been used only to distinguish the transformation time activity and run time activity.
      For example, the boolean conditions in cases~\ref{transform.expr.2} and~\ref{transform.stmt.1} are not evaluated by the
      body of function \T but is a part of the transformed code and is evaluated at run time when the transformed program is executed.
	Similar remarks apply to the \ifkw statements and other operations inside the parenthesis of \emit function.
\item Function $\fullarrayaccess(\Stmt)$ analyzes\footnote{\label{note1}Results of analysis may be over-approximated.} the characteristics of the loop \Stmt.
	\begin{itemize}
		\item When the loop \Stmt accesses array \arrvar completely, $\fullarrayaccess(\Stmt)$ returns true.
		This means that loop either reads from or writes to 
		all the indices of the array; this could well be under different conditions in the code.
	\item When the loop \Stmt accesses array \arrvar partially, $\fullarrayaccess(\Stmt)$ returns false. This means
	that the loop may not access all the indices or some indices are being read while some other indices are being written.
	\item When loop \Stmt does not access an array, $\fullarrayaccess(\Stmt)$ returns false. 
	\end{itemize}


\item Function $\loopdefs(\Stmt)$ returns a possibly over-approximated set of variables modified in loop \Stmt.
	\begin{compactitem}	
	\item Scalar variables are included in this set if they appear on the left hand side of an assignment statement in \Stmt 
	(except when the RHS is a constant). 
	\item Loop iterator variable $i$ of loop \Stmt is not included in this set.
	\item 
	Array variable \arrvar 
	is included in this set when the array access expression appears on the left hand side of an assignment
	and the value of index expression is different from the current value of the loop iterator $i$.
\end{compactitem}

\item Function $\size(a)$ returns the highest index value for array \arrvar. 
\end{compactitem}
With the above functions, the transformation rules are easy to understand. Here we explain non-trivial transformations.
\begin{compactitem}
\item To choose an array index for a run, witness index (\absind) is initialized at the start of the program 
	to a non-deterministically chosen value from the range of the indices of the array (case~\ref{transform.prog.1}). 
	This value determines the array element ($a[\absind]$) represented by the witness variable (\absarr). 
\item An array access expression in LHS or RHS is replaced by the witness variable (\absarr) only when the
      values of the witness index and index expression of the array access expression match.
      When the values do not match, it implies that an element other than \absind\ is being accessed.
      Hence for any other index the assignment does not happen (case~\ref{transform.stmt.1}). 
      However RHS of such assignment statement is retained to handle side-effects of 
      expressions (not included in the grammar). 
      Similarly, when an index other than \absind is read, it is
      replaced by a non-deterministically chosen value (case~\ref{transform.expr.2}). 
\item Loop iterations are eliminated by removing the loop header containing initialization, test, and increment expression for 
	the loop iterator variable. The loop bodies are transformed as follows :
	\begin{compactitem}
	\item Each variable in the set returned by $\loopdefs(\Stmt)$ is assigned a non-deterministically chosen value 
		at the start of the loop body and also after the loop body.
	These assignments ensure that values dependent on loop iterations are 
	over-approximated when used inside or outside the loop body. 
       \item The loop iterator $i$ is a special scalar variable. 
	 A loop \Stmt where \fullarrayaccess(\Stmt) holds (case~\ref{transform.stmt.3}) essentially means that loop bound is same as the array size 
	 and array is accessed using loop iterator as index.
	 Hence it is safe to replace array access with \absarr where the 
	 values of loop iterator and index expression match. 
	 To ensure this we equate loop iterator with \absind.
	 This models the behaviour of the original program precisely.
	 However, when $\fullarrayaccess(\Stmt)$ does not hold (case~\ref{transform.stmt.4}), we assign loop iterator $i$ to a 
	 non-deterministically chosen value from the loop bound. 
 \item Each statement in the loop body is transformed 
	as per the transformation rules. 
 \item Finally, the entire loop body is made conditional using a non-deterministically chosen true/false value 
	 when $\fullarrayaccess(\Stmt)$ does not hold. 
	This models the partial
        accesses of array indices which imply that some of the values defined before the loop may reach after the loop.
	However, the transformed loop body is unconditionally executed when $\fullarrayaccess(\Stmt)$ holds.
      \end{compactitem}
 \end{compactitem}

\section{Soundness of Abstraction}
\label{sec:proof}

\externaldocument{transformation}

\begin{figure}[t]
\psset{unit=1mm,arrowsize=1.75}
\begin{tabular}{@{}c@{}}
\begin{pspicture}(0,7)(85,64)
\putnode{m1}{origin}{7}{57}{\psframebox{$i = \Expr_1$}}
\putnode{p}{m1}{2}{-51}{(a) Original Loop}
\putnode{p}{m1}{0}{5}{$l$}
\putnode{n1}{m1}{0}{-13}{\psframebox{$\;\;\;\;\Expr_2\;\;\;\;$}}
\putnode{m2}{n1}{6}{-13}{\psframebox{$\;\;\;\;\Stmt_1\;\;\;\;$}}
\putnode{n2}{m2}{0}{-13}{\psframebox{$\;\;\;\;\Expr_3\;\;\;\;$}}
\putnode{n3}{m2}{-10}{-20}{$m$}
\ncline[nodesepA=1]{->}{n0}{m1}
\ncline{->}{m1}{n1}
\ncline{->}{n1}{m2}
\aput*[3pt](.5){$b$}
\ncline[nodesepB=.15]{->}{m2}{n2}
\aput*[3pt](.5){$a$}
\nccurve[angleA=240,angleB=90,ncurv=.4,nodesepB=1]{->}{n1}{n3}
\ncloop[armA=2.5,angleA=270,angleB=90,offsetB=2,loopsize=-9,linearc=.6]{->}{n2}{n1}
\putnode{n0}{origin}{33}{57}{\psframebox{$\ifkw(nd(0,1))$}}
\putnode{n1}{n0}{12}{-12}{
		\psframebox{$\begin{array}{@{}l@{}}
				u  = nd()
					\\
				i= nd(\loopbound(\Stmt))
								
				\end{array}$}}
\putnode{m2}{n1}{0}{-14}{\psframebox{$\T(\Stmt_1)$}}
\putnode{n3}{m2}{-12}{-11}{\psframebox{$u=nd()$}}
\putnode{n4}{n3}{0}{-8}{$m'$}
\ncline{->}{n0}{n1}
\ncline{->}{n1}{m2}
\aput*[4pt](.45){$b'$}
\ncline[nodesepB=.5]{->}{n3}{n4}
\ncline{->}{m2}{n3}
\aput[0pt](.25){$a'$}
\nccurve[angleA=225,angleB=120,ncurv=.6]{->}{n0}{n3}
\aput[1pt](.8){$d'$}
\putnode{p}{n0}{5}{-51}{(b) $\neg\fullarrayaccess$}
\putnode{p}{n0}{0}{5}{$l'$}
\putnode{n0}{origin}{74}{55}{\psframebox{$\begin{array}{c}
	u = nd()
	\\
	i=\absind			
	\end{array}$}}
\putnode{p}{n0}{0}{-49}{(c) $\fullarrayaccess$}
\putnode{p}{n0}{0}{7}{$l'$}
\putnode{m2}{n0}{0}{-15}{\psframebox{$\T(\Stmt_1)$}}
\putnode{n3}{m2}{0}{-13}{\psframebox{$u=nd()$}}
\putnode{n4}{n3}{0}{-10}{$m'$}
\ncline{->}{n0}{m2}
\aput*[4pt](.45){$b'$}
\ncline[nodesepB=.5]{->}{m2}{n3}
\aput*[4pt](.45){$a'$}
\ncline{->}{n3}{n4}

\end{pspicture}
\end{tabular}

\caption{Transforming a loop}
\label{fig:trans.pgm}
\end{figure}


In this section, We argue that the proposed abstraction is sound, i.e. if the transformed
program is safe, then so is the original program. As discussed in Section~\ref{sec:dsa}, the
soundness of abstraction is immediate if the abstract states ``represent'' the
original states. We, therefore, prove that the proposed transformation ensures
that the \emph{represents} relation, $\rightsquigarrow$, holds between abstract and
original states. The proof is by induction on the number of tranformation rules
applied. For the base case, we prove that $\rightsquigarrow$ holds in the beginning -
before applying any transformation (lemma~\ref{lemm:program.start}). 
In the inductive step, we prove
that if $\rightsquigarrow$ holds at some stage during the abstraction, then the subsequent
transformation continues to preserve $\rightsquigarrow$. We prove this by structural
induction on program transformations. Lemma~\ref{lemma:strucinduc.abs} helps us in establishing the base
case of this structural induction.

\begin{lemma}
\label{lemm:program.start}
Let the start of the original program (i.e. the program point just before the code derivable from non-terminal \Stmt in
production \text{$\Prog \rightarrow \Stmt$} in grammar defined in equation(~\ref{eq:grammar.org}) be denoted by $l$. 
The corresponding program point in the 
transformed program \pa, denoted by $l'$, is just after \Init and just before the non-terminal $S$ in production \text{$\Prog \rightarrow \Init \;; \; \Stmt$}
(Grammar in equation~\ref{eq:grammar.trans}).
Let \ms be the state at $l$ and \smsa be the set of states at $l'$ in \pc and \pa respectively.
Then,
\text{$\smsalp \rightsquigarrow\msl$}.
\end{lemma}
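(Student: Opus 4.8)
The plan is to peel back the two layers of the \emph{represents} relation and then exhibit, for every array index, a concrete transformed state that witnesses it. By Definition~\ref{def:abselement2}, establishing $\smsalp \rightsquigarrow \msl$ amounts to showing that for every index $c_i$ with $\big(\loc(\arrvar[c_i]), c_j\big) \in \msl$ there is some $\msa \in \smsalp$ with $\msa \rightsquigarrow \msl$; and by Definition~\ref{def:abselement1} each such $\msa$ must have exactly the form $\{(\absind, c_i), (\absarr, c_j)\} \cup \{(y,c) \mid (y,c) \in \msl, y \in \svar - \{\arrvar\}\}$. So the proof reduces to constructing, index by index, states of precisely this shape inside $\smsalp$, noting first that the transformed memory $\smema$ has exactly the domain $\{\absind, \absarr\} \cup (\svar - \{\arrvar\})$, so no spurious entries can spoil the required set equality.

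First I would characterize the two state sets explicitly. At $l$ the original program has executed nothing, so $\msl$ is the initial state: it maps every scalar $y \in \svar - \{\arrvar\}$ to its initial value and every $\loc(\arrvar[c_i])$ to the array's initial contents. At $l'$ the transformed program has executed only \Init, namely $\absind = nd(\size(\arrvar))$ from rule~\ref{transform.prog.1}; this assignment leaves every scalar and the witness variable $\absarr$ untouched, retaining the initial values given by the declarations that $\pa$ inherits from $\pc$, and sets $\absind$ to a nondeterministically chosen index in $[0,\size(\arrvar)]$. Hence $\smsalp$ is precisely the family of states obtained by ranging over the chosen value of $\absind$.

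Next, for a fixed index $c_i$ I would pick the member of $\smsalp$ in which the nondeterministic choice gave $\absind = c_i$. Its scalar component coincides with that of $\msl$ because \Init\ writes no scalar and $\pa$ starts its scalars identically to $\pc$, so the scalar block of the required set equality holds automatically; the $\absind$ component $(\absind, c_i)$ holds by the choice of this state; and it remains only to verify the $\absarr$ component $(\absarr, c_j)$, i.e.\ that the witness variable already carries the initial value $c_j$ of $\arrvar[c_i]$. Assembling one such witness state per index then discharges Definition~\ref{def:abselement2}.

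The main obstacle is exactly that $\absarr$ component: at $l'$ the witness variable has not yet been written by any array assignment, so I must argue that its \emph{initial} value already agrees with the initial value $c_j$ of $\arrvar[c_i]$. This is where the shared-initialization discipline is essential --- $\absarr$ is declared and default-initialized in the same scope and with the same type as $\arrvar$. For default-initialized (e.g.\ global) storage all array cells and $\absarr$ hold the common default value, so they agree for every $c_i$; for unconstrained storage $\absarr$ ranges nondeterministically, so $\smsalp$ already contains a state realizing $\absarr = c_j$ alongside each choice $\absind = c_i$. In either reading the required witness state exists, and the lemma follows. I would therefore make this initialization convention explicit at the outset, so that the step collapses to a one-line appeal rather than remaining a hidden assumption.
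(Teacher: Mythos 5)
Your proposal is correct and follows essentially the same route as the paper's proof: both unfold Definitions~\ref{def:abselement1} and~\ref{def:abselement2}, observe that at $l'$ only \Init has executed so $\smsalp$ is exactly the family of initial states indexed by the nondeterministic choice of $\absind$, and pick the member with $\absind = c_i$ for each index. The paper handles the one delicate step --- agreement of $\absarr$'s initial value with that of $\arrvar[c_i]$ --- by simply positing a common initial value $\init(\arrvar)$ for all array cells, which is the same shared-initialization convention you make explicit at the end.
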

\begin{proof}
	Let \text{$\init(x)$} denote the initial value in variable \text{$x \in \svar - \{\arrvar\}$}.
Similarly, let
\text{$\init(a)$} denote the initial value common to all indices in array \arrvar.
Then, \msl in \pc is: 
\begin{align*}
\msl  = &
	\left\{
	\left(x,\init(x)\right) \mid 
	\left(x \in \svar - \{\arrvar\}\right) 
	\right\} \; \cup \;
	\\
	&
	\;\;\,
	\left\{
		\left(\loc(a[i]),\init(a)\right) \mid 0 \leq i \leq \size(\arrvar)
	\right\}
\end{align*}

Since $l'$ appears after \Init
in \pa, the witness index variable is assigned a non-deterministic
value corresponding to some
index of the array  (case~\ref{transform.prog.1} in Figure~\ref{fig:transform}).
All other variables (including the witness variable representing array) retain their initial values.  Then there exist $\msalp \in \smsalp$ such that,
\begin{align*}
	 \msalp & = 
	 X 
	\cup
	\left\{
		\left(\absind,i\right)
	\right\} 
\intertext{where, $X$ contains the variables with their initial values.}
X & = 
	\left\{
	\left(x,\init(x)\right) \mid 
	\left(x \in \svar - \{\arrvar\}\right) 
	\right\} \cup
	\left\{
	\left(\absarr,\init(\arrvar)\right)
	\right\} 
	\\
\end{align*}
It can be immediately seen that {\smsalp $\rightsquigarrow$ \msl} (as per Definition~\ref{def:abselement1} and ~\ref{def:abselement2}) 
for every value of \absind. 
Note that \absarr and \absind are the witness variable and witness index respectively.  


\end{proof}

\begin{lemma}
\label{lemm:rhs1}
Let $\msl$ be a state at a program point $l$ in \pc 
and $\smsalp$ be set of states at the corresponding program point $l'$ in transformed program \pa.
Consider an arbitrary expression \text{$e \in \sexpr$} just after $l$ in original program \pc.
Then, 
\[
	\text{$\smsalp \rightsquigarrow\msl \Rightarrow \eval{\T(e)}{\smsalp} \supseteq \{\eval{e}{\msl}\}$}.
\]
\end{lemma}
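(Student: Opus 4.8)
The plan is to prove the inclusion by structural induction on the expression \expr, following the grammar $\Expr \rightarrow \Expr \oplus \Expr \mid \Lvalue \mid c$ together with $\Lvalue \rightarrow \arrvar[\Expr] \mid \var$, and mirroring exactly the three expression rules (\ref{transform.expr.1}), (\ref{transform.expr.2}), (\ref{transform.expr.3}) of Figure~\ref{fig:transform}. Throughout I would read \eval{\T(e)}{\smsalp} as the union over \text{$\msa \in \smsalp$} of the per-state value sets \eval{\T(e)}{\msa}, where each such set ranges over the outcomes of the nondeterministic $nd()$ occurrences in \T(e). The hypothesis \text{$\smsalp \rightsquigarrow \msl$} is used first to guarantee that \smsalp is non-empty: by Definition~\ref{def:abselement2} it supplies, for every array index occurring in \msl, a witness state \msa with \text{$\msa \rightsquigarrow \msl$}, and Definition~\ref{def:abselement1} pins down the contents of such a witness.

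The base cases are direct. If $e \equiv c$, then $\T(e) = c$ by rule (\ref{transform.expr.3}) and the constant takes value $c$ in every state, so \eval{e}{\msl} occurs in \eval{\T(e)}{\smsalp}. If $e \equiv \var$ with \text{$\var \in \svar - \{\arrvar\}$}, again $\T(e) = \var$ by (\ref{transform.expr.3}), and Definition~\ref{def:abselement1} forces every witness state to agree with \msl on all scalar variables, so $\var$ evaluates identically. The interesting base case is $e \equiv \arrvar[e_1] \in \earr$, where rule (\ref{transform.expr.2}) gives $\T(e) = (e_1 == \absind)\ \qkw\ \absarr\ \ckw\ nd()$. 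Writing $c_1 = \eval{e_1}{\msl}$ and $c_2 = \eval{\arrvar[e_1]}{\msl}$ so that $\big(\loc(\arrvar[c_1]),c_2\big) \in \msl$, Definition~\ref{def:abselement2} yields a witness \text{$\msa \in \smsalp$} with $\eval{\absind}{\msa} = c_1$ and $\eval{\absarr}{\msa} = c_2$; since the index expression $e_1$ is scalar it evaluates to $c_1$ in \msa as well, the guard $e_1 == \absind$ holds, and \T(e) returns \absarr, i.e. $c_2 = \eval{e}{\msl}$.

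The inductive step treats $e \equiv e_1 \oplus e_2$, where rule (\ref{transform.expr.1}) gives $\T(e) = \T(e_1) \oplus \T(e_2)$, and this is where I expect the main obstacle. The two induction hypotheses reproduce \eval{e_1}{\msl} and \eval{e_2}{\msl} individually, but to place their combination into \eval{\T(e_1) \oplus \T(e_2)}{\smsalp} I must realise \emph{both} subexpression values inside a \emph{single} state, whereas one transformed state fixes only a single array index through the pair \text{$\langle \absind, \absarr \rangle$}, and $e_1$ and $e_2$ may read the array at different indices. The resolution is precisely the over-approximation of rule (\ref{transform.expr.2}): in any fixed witness state, a read of an index other than \absind is rewritten to an independent $nd()$, which can nondeterministically reproduce that element's value from \msl. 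To make this rigorous I would strengthen the induction hypothesis to the per-state form ``for every \text{$\msa \in \smsalp$}, \eval{e}{\msl} lies in \eval{\T(e)}{\msa}'', re-verify the base cases in this stronger shape (at the witnessed index the read returns the correct \absarr, and at every other index it falls through to a matching $nd()$), and then, the $nd()$ instances in \T(e_1) and \T(e_2) being independent, realise both values in one and the same \msa. Taking the union over \smsalp then delivers the stated inclusion \text{$\eval{\T(e)}{\smsalp} \supseteq \{\eval{e}{\msl}\}$}.

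The only loose end I would flag is the treatment of a non-scalar index expression $e_1$ inside rule (\ref{transform.expr.2}), where the guard uses $e_1$ rather than $\T(e_1)$; under the standing assumption that array indices are scalar (as in the motivating loops) this does not arise, and otherwise it would be folded into the same induction by first applying the hypothesis to $e_1$.
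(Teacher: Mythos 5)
Your proof follows essentially the same route as the paper's: structural induction on the expression grammar, with the cases mirroring rules (\ref{transform.expr.1})--(\ref{transform.expr.3}). You are in fact more careful than the paper at the two points it glosses over --- the array-access case (where the paper only remarks that ``more indices are read'') and the $\Expr_1 \oplus \Expr_2$ case (where the paper merely juxtaposes the two induction hypotheses without explaining how both subexpression values are realized in a single state); your per-state strengthening, together with the observation that reads at non-witnessed indices fall through to an independent $nd()$ that can reproduce the original element's value, supplies exactly the justification the paper leaves implicit.
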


\begin{proof}
Since $e$ is derived from \Expr (grammar~\ref{eq:grammar.org}),
we prove the lemma by structural induction on the productions for \Expr
by substituting \Lvalue by its right hand side.
\[
\begin{array}{rl}
\Expr \rightarrow &
	\Expr_1 \oplus \Expr_2
	\;\; \big\lvert \;\; a[\Expr_1] 
	\;\; \big\lvert \;\; \var
	\;\; \big\lvert \;\; c
\end{array}
\]
For the base case, that is when \Expr derives \var or $c$, \text{$\T(\Expr) = \Expr$} 
(case \ref{transform.expr.3} in Figure~\ref{fig:transform}) and the lemma holds trivially.

Assume it holds for an arbitrary \Expr then




\begin{itemize}
	\item When \Expr derives $a[\Expr_1]$, from the base case it implies that \text{$\eval{\T(\Expr_1)}{\smsalp} \supseteq \eval{\Expr_1}{\msl}$}.
	This means that in \pa more indices are read than in the \pc. Hence the lemma holds.	
		
%

	\item When \Expr derives \text{$\Expr_1 \oplus \Expr_2$} - From the hypothesis 
\text{$\smsalp \rightsquigarrow\msl$} it follows that \\ \text{$\eval{\T(\Expr_1)}{\smsalp} \supseteq \eval{\Expr_1}{\msl}$} and
\\ \text{$\eval{\T(\Expr_2)}{\smsalp} \supseteq \eval{\Expr_2}{\msl}$}. 

\end{itemize}

Therefore \text{$\eval{\T(\Expr)}{\smsalp} \supseteq \eval{\Expr}{\msl}$}.

\end{proof}

\begin{lemma} 
Let $l$ and $m$ be the program points just before and after a statement $s$ in \pc and let $\msl$ and $\msm$ be 
the states at $l$ and $m$ respectively.
Let $l'$ and $m'$ be the program points just before and after the corresponding transformed statement $\T(s)$ in \pa.
Let \smsalp and \smsamp be the set of states at $l'$ and $m'$ respectively. 
Then, \text{$\smsalp \rightsquigarrow \msl \Rightarrow \smsamp \rightsquigarrow \msm$}.


\label{lemma:strucinduc.abs}
\end{lemma}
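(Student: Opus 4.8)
The plan is to prove the implication by structural induction on the statement $s$, following the productions for \Stmt in grammar~(\ref{eq:grammar.org}) and case-splitting exactly as the transformation rules $S_1$--$S_9$ of Figure~\ref{fig:transform} do. Unfolding Definitions~\ref{def:abselement1} and~\ref{def:abselement2}, the hypothesis $\smsalp \rightsquigarrow \msl$ says that for every array index $c_1$ there is a state $\msalp \in \smsalp$ that agrees with $\msl$ on all scalar and structure variables and carries $(\absind,c_1)$ together with $(\absarr, \eval{a[c_1]}{\msl})$. To establish $\smsamp \rightsquigarrow \msm$ I would fix an arbitrary index $c_1$, start from this witnessing state $\msalp$, and exhibit one successor $\msamp \in \smsamp$ that represents $\msm$ at $c_1$. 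The recurring device is that each $nd()$ emitted by \T spawns a whole family of successors, so I may always pick the nondeterministic outcome that reproduces the original value; Lemma~\ref{lemm:rhs1} guarantees $\eval{e}{\msl} \in \eval{\T(e)}{\smsalp}$, and since $\msalp$ already fixes the scalars and the witness at $c_1$, that value is in fact realisable on a successor of $\msalp$ itself.

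\textbf{Base cases.} For a scalar or structure assignment (rule $S_2$, $\Lvalue = \Expr$ with $\Lvalue \not\equiv \arrvar[\Expr_1]$), $\msm$ differs from $\msl$ only in the binding $\Lvalue \mapsto \eval{e}{\msl}$, while $\T(s)$ performs $\Lvalue = \T(\Expr)$ on each abstract state; choosing the $nd()$ values inside $\T(\Expr)$ so that it evaluates to $\eval{e}{\msl}$ on $\msalp$ yields a representing state and leaves every array witness untouched. For an array write (rule $S_1$, $\arrvar[\Expr_1] = \Expr$) I would split on whether $\eval{\Expr_1}{\msl} = c_1$: if so, the guard $\Expr_1 == \absind$ holds in $\msalp$ and $\absarr$ is set to a realisation of $\eval{e}{\msl}$, matching the updated $a[c_1]$; if not, the guard is false, $\absarr$ retains $\eval{a[c_1]}{\msl}$, matching $\msm$ in which $a[c_1]$ is unchanged. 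Here I assume, as in the grammar's intended use, that index expressions are scalar, so that $\Expr_1$ evaluates identically in $\msl$ and $\msalp$. The assertion rule $S_8$ and the default rule $S_9$ modify no state, so $\msm = \msl$ and $\smsamp = \smsalp$ and the claim is immediate.

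\textbf{Compositional cases.} For sequencing (rule $S_7$, $s \equiv s_1 ; s_2$) I would name the intermediate concrete point $n$ with state $\sigma_n$ and its transformed image $n'$ with state set $\Sigma'_{n'}$, apply the induction hypothesis to $s_1$ to obtain $\Sigma'_{n'} \rightsquigarrow \sigma_n$, and then apply it to $s_2$ to conclude $\smsamp \rightsquigarrow \msm$. For the conditionals (rules $S_5$ and $S_6$) the original execution follows the branch selected by $\eval{\Expr}{\msl}$; since $\T(\Expr)$ over-approximates the guard by Lemma~\ref{lemm:rhs1}, the same branch is enabled for a suitable successor of $\msalp$, and the induction hypothesis applied to that branch produces a representing state, while the additional states contributed by the other branch merely enlarge $\smsamp$ and do not threaten the existential requirement of $\rightsquigarrow$.

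\textbf{Loops — the crux.} The two loop rules $S_3$ and $S_4$ (Figure~\ref{fig:trans.pgm}(c) and~(b)) are where I expect the real difficulty, since a single execution of the transformed body must stand in for the entire iteration space of the original loop. When \fullarrayaccess\ holds (rule $S_3$), the loop reaches every index through its iterator, so the final value of $a[c_1]$ in $\msm$ is produced by precisely the iteration in which the iterator equals $c_1$; setting $i = \absind = c_1$ and running $\T(\Stmt_1)$ once reproduces that iteration's write into $\absarr$, provided the loop-carried scalars hold their values at the entry of iteration $c_1$. This is exactly what the leading havoc $u = nd()$ for every $u \in \loopdefs(\Stmt_1)$ supplies: some nondeterministic choice matches those values, the definition of \loopdefs\ ensuring that every scalar varying across iterations is covered, and the trailing havoc over-approximates their post-loop values. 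The case in which \fullarrayaccess\ fails (rule $S_4$) is analogous, except that $c_1$ need not be among the written indices, so I would use the guard $\ifkw(nd(0,1))$ either to execute the body with $i = nd(\loopbound(\Stmt))$ set to the relevant index (when $a[c_1]$ is written) or to skip it (preserving $a[c_1]$ otherwise). The delicate point throughout is to show that the single transformed body faithfully simulates the one relevant concrete iteration while the surrounding nondeterministic assignments absorb all loop-carried scalar values; verifying this correspondence, rather than any of the straight-line cases, is what I expect to carry the weight of the proof.
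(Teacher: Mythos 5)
Your proposal follows essentially the same route as the paper: structural induction on the statement grammar mirroring the transformation rules, with straight-line and compositional cases discharged via Lemma~\ref{lemm:rhs1}, and the loop case handled by letting the single transformed body simulate the one iteration that last writes $a[c_1]$ while the leading and trailing $nd()$ assignments to $\loopdefs(\Stmt_1)$ absorb the loop-carried scalar values. The only point the paper treats explicitly that you elide is the sub-case where the array is written at an index other than the loop iterator (so that $\arrvar \in \loopdefs(\Stmt_1)$ and the witness itself is havocked), but this fits within the same argument and does not change the structure of the proof.
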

\begin{proof}
	Statement $s$ is derived from non-terminal \Stmt in the grammar~\ref{eq:grammar.org}. 
	Hence, we prove the lemma by structural induction on \Stmt.

For the base case, consider the non-recursive productions of \Stmt:
\begin{enumerate}
	\item   For the right hand side \text{$(\Lvalue=\Expr)$}, there are two cases to
      consider-
 \begin{itemize}
	 \item When \text{$(\Lvalue=\Expr)$} derives \text{$\arrvar[\expr_1]  = \expr_2$} 
      (case \ref{transform.stmt.1} in Figure~\ref{fig:transform}), 
		then 
		since $\T(\expr_2)$ always over-approximates 
			(Lemma~\ref{lemm:rhs1}) then $\smsamp \rightsquigarrow\msm$ trivially.

      \item When \Lvalue is not an array, the $\transform(\Expr)$ is assigned to unchanged \Lvalue.
	      Since $\transform(\Expr)$ is always over-approximated $\smsamp \rightsquigarrow\msm$. 
      \end{itemize}
\item For the right hand side \text{$\assertkw(\Expr)$},
      the states do not change and the lemma holds trivially.
\end{enumerate}
For the inductive step, we consider the recursive productions of \Stmt.
\begin{enumerate}
\item For the right hand side \text{$\Stmt_1;\Stmt_2$},
	the lemma holds by induction. 

\item For the right hand side \text{$\ifkw(\Expr) \; \Stmt_1 \; \elsekw \; \Stmt_2$},
	the set of values of \text{$\T(\Expr)$} is over-approximated (lemma~\ref{lemm:rhs1}) and it does not alter the state.
	Hence, $\rightsquigarrow$ holds at the start of $\Stmt_1$ and $\Stmt_2$.
       By induction, the lemma holds whether $\Stmt_1$ is executed or  $\Stmt_2$ and hence it holds after the \emph{if} statement. 

\item For the right hand side \text{$\ifkw(\Expr) \; \Stmt_1 $}, the argument is similar to the
      case of \text{$\ifkw(\Expr) \; \Stmt_1 \; \elsekw \; \Stmt_2$}.

\item For the right hand side \text{$\forkw(i = \Expr_1;\;\Expr_2;\;\Expr_3) \; \Stmt_1$},
	let $b$ and $a$ be the program point just before and after $\Stmt_1$ 
	and $b'$ and $a'$ be the program point just before and after $\T(\Stmt_1)$.   
	We transform the loop in two ways as illustrated in Figure~\ref{fig:trans.pgm}.
	The loop body has either an assignment statement of the form, \text{$\arrvar[\expr_1]  = \expr_2$} or an expression of the form $\arrvar[\expr]$ or both.
	\begin{enumerate}
	\item We first show that the lemma holds when \fullarrayaccess does not hold (case~\ref{transform.stmt.4})
		In this case $a[c]$ may not be referred to or updated, where $(\absind, c) \in \msalp$.

	\begin{itemize} 
		\item First lets consider the path $l'$ to $d'$ to $m'$ (see Figure~\ref{fig:trans.pgm}.b). In the transformed program since any path can be chosen non-deterministically, this is the path we take when $a[c]$ is not referred to or updated.

		Since in this path we add non-deterministic assignments to variables in the set \loopdefs(\Stmt),
		$\rightsquigarrow$ holds at the end of the \forkw statement.
		
		\item For the other path (see Figure~\ref{fig:trans.pgm}.b), which goes through the transformed loop body,
			we establish the lemma in following steps-
			\begin{enumerate}
			\item \text{$\msalp \rightsquigarrow \msl \Rightarrow \msabp \rightsquigarrow \msb$}
			\item \text{$\msabp \rightsquigarrow \msb \Rightarrow \msaap \rightsquigarrow \msA$}
			\item \text{$\msaap \rightsquigarrow \msA \Rightarrow \msamp \rightsquigarrow \msm$}	
			\end{enumerate}
			Step-(ii) follows from induction.
			Step-(iii) follows a similar argument as in the case above, since the variables that get modified obtain a \\non-deterministic value. 
			Whenever $\arrvar \in \loopdefs(\Stmt)$ or array is not modified by $\Stmt$, a similar argument can be used for step - (i) too.  
			
			Now for the case when $a[c]$ is modified by $\Stmt$ and $\arrvar \notin
\loopdefs(\Stmt)$ consider $\ms$ at the start of the iteration in which $a[c]$
is modified. At this point $\msabp \rightsquigarrow \msb$ because all variables have been
assigned non-deterministic values, $\msalp \rightsquigarrow \msl$ and $a[c]$ has not been
modified so far. Therefore it inductively follows $\msaap \rightsquigarrow \msA$ and therefore $\msamp \rightsquigarrow \msm$.
Since, $\msamp \in \smsamp$ hence $\smsamp \rightsquigarrow \msm$.
\end{itemize}

	\item In the case of \fullarrayaccess, $a[c]$ is always modified by $\Stmt$. We can therefore use the same argument
		as in the case when $a[c]$ is modified, above. 

	\end{enumerate}

\end{enumerate}
Hence the lemma.
\end{proof}

\begin{theorem}
If the assertion \assert is violated in the original program \pc, then it will be violated in transformed program \pa also. 
\end{theorem}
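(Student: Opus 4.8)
The plan is to observe that this theorem is precisely the contrapositive of the soundness statement sketched in the introduction: soundness asserts that if \pa is safe then \pc is safe, which is logically the same as saying that a violation in \pc forces a violation in \pa. Since all the machinery has already been assembled in Lemmas~\ref{lemm:program.start},~\ref{lemm:rhs1}, and~\ref{lemma:strucinduc.abs}, the proof reduces to threading these per-statement guarantees along a single violating trace and then reading off the violation in the transformed program.

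First I would fix a witness to the violation in \pc: a state $\msm$ reaching the program point immediately before \assert at which the asserted expression $\Expr$ evaluates to \false, i.e. $\eval{\Expr}{\msm} = \false$. Because the violation concerns concrete array contents, I would single out the particular index $c_1$ whose value falsifies the assertion. I would then establish that the \emph{represents} relation $\rightsquigarrow$ holds all the way up to this point. The base case is Lemma~\ref{lemm:program.start}, which gives $\smsalp \rightsquigarrow \msl$ at the program start; I would propagate it forward by structural induction on the statements of \pc using Lemma~\ref{lemma:strucinduc.abs}, whose conclusion $\smsalp \rightsquigarrow \msl \Rightarrow \smsamp \rightsquigarrow \msm$ is exactly the inductive step. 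Since an assertion statement leaves the state unchanged, this yields $\smsamp \rightsquigarrow \msm$, where $\smsamp$ is the set of states of \pa at the point $m'$ corresponding to the point just before \assert.

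Finally I would apply Lemma~\ref{lemm:rhs1} to the asserted expression: $\smsamp \rightsquigarrow \msm$ implies $\eval{\T(\Expr)}{\smsamp} \supseteq \{\eval{\Expr}{\msm}\}$. As $\eval{\Expr}{\msm} = \false$, it follows that $\false \in \eval{\T(\Expr)}{\smsamp}$, so some state $\msamp \in \smsamp$ evaluates $\T(\Expr)$ to \false. By rule~\ref{transform.stmt.8} the transformed assertion is $\assertkw(\T(\Expr))$, hence it is violated in \pa, which is the claim.

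The hard part will be the implicit reachability obligation hidden in the last step: I must ensure that \pa actually has a run realizing the representing state $\msamp$ at $m'$, rather than merely that such a state is admitted by the relation $\rightsquigarrow$. Concretely, the nondeterministic initialization $\absind = nd(\size(\arrvar))$ (rule~\ref{transform.prog.1}) must be allowed to pick exactly the violated index $c_1$, and for every $\neg\fullarrayaccess$ loop on the trace the guard $\ifkw(nd(0,1))$ together with the nondeterministic assignments to the $\loopdefs$ variables must be resolvable so that $\absarr$ carries the correct value of $a[c_1]$ when control reaches $m'$. This is precisely the per-loop path selection already carried out in the case analysis of Lemma~\ref{lemma:strucinduc.abs} (the path $l' \to d' \to m'$ versus the one through the transformed body in Figure~\ref{fig:trans.pgm}.b), so the remaining work is to observe that these local choices compose consistently along the whole trace into one run of \pa that witnesses the violation.
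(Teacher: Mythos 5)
Your proposal is correct and follows essentially the same route as the paper's own proof: fix the violating index, invoke Lemma~\ref{lemm:program.start} and Lemma~\ref{lemma:strucinduc.abs} to carry the $\rightsquigarrow$ relation to the assertion point, and conclude via the over-approximation of expressions in Lemma~\ref{lemm:rhs1}. Your closing remark about the reachability of the representing state is a useful explicit caution that the paper leaves implicit in its set-of-states formulation, but it does not change the argument.
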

\begin{proof}
Let the assert get violated for some $a[c]$. Since $\absind$ is initialized
non-deterministically it can take the value $c$ and we have shown in Lemma~\ref{lemm:rhs1} that
all expressions in $\pa$ are over-approximated. Lemma~\ref{lemm:program.start} and Lemma~\ref{lemma:strucinduc.abs} ensure the premise for Lemma~\ref{lemm:rhs1}.
Hence the theorem follows.

\end{proof}

\newcommand{\Defset}{\text{\sf\em$\mathbb{S}_\text{\em def}$}\xspace}
\newcommand{\ImpVarset}{\text{\sf\em$\mathbb{V}_\text{\em imp}$}\xspace}
\newcommand{\ImpExpset}{\text{\sf\em$\mathbb{E}_\text{\em imp}$}\xspace}

\section{Precision}
\label{sec::prec}

We characterize the assertions for which our transformation is precise -- an assertion will fail in 
\pa if and only if it does so in \pc.
We denote such an assertion as \assertinv. We focus on \assertinv in a loop.
In case if array accesses outside loops, we do not claim precision. Our experience shows that such situations
are rare in programs with large arrays.

Our transformations
replace array access expressions and loop statements while the statements involving
scalars alone outside the loop remain unmodified. Hence precision criteria need to focus on the statements within loops and not outside it.

Let assertion \assertinv appear in a loop statement $S\!_\assertinv$.
Let \ImpVarset be the set of variables and \ImpExpset be the set of array access expressions on which \assertinv is 
data or control dependent within the loop $S\!_\assertinv$.
Let the set of loop statements from where definitions reach \assertinv be denoted by \Defset, note that this
set is a transitive closure for data dependence.
Our technique is precise when:
\begin{itemize}
\item $\fullarrayaccess(S)$ holds for each \text{$S \in \{S\!_\assertinv\} \cup \Defset$} \hfill(rule $l_1$)
\item If  $a[e] \in \ImpExpset$ then
	\begin{itemize}
	\item the index expression $e = i$ where $i$ is the loop iterator of loop $S\!_\assertinv$  \hfill 
		(rule $a_2$)
 	\item $\arrvar \notin \loopdefs(S)$ where \text{$S \in \{S\!_\assertinv\} \cup \Defset$} \hfill (rule $a_3$)
\end{itemize}
\item  If $\var \in \ImpVarset$ then
	$\var \notin \loopdefs(S)$ where \text{$S \in \{S\!_\assertinv\} \cup \Defset$} \hfill (rule $s_4$)
\item For an assignment statement of the form $a[e_1] = e_2$ in loop $S$ where $S \in \Defset$,  
      \begin{itemize}
       \item if $e_2$ is an array access expression then it must be of the form $a[i]$ where $i$ is the loop iterator 
	       of loop $S$ 	\hfill (rule $d_5$)
	\item if $e_2$ is \var then $\var \notin \loopdefs(S)$ where \text{$S \in \Defset$} \hfill 
		(rule $d_6$)
	\end{itemize}

\end{itemize}

\begin{theorem}
\label{th:comp}
If an assertion \assertinv that satisfies above rules 
holds in the original program \pc, then it will hold in the 
transformed program \pa also.

\end{theorem}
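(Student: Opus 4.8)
The plan is to prove the contrapositive: if \assertinv is violated in the transformed program \pa, then it is already violated in the original program \pc. Together with the soundness theorem of Section~\ref{sec:proof} (whose contrapositive states that safety of \pa implies safety of \pc), this establishes the two-sided precision claimed in the section. So I would begin by fixing a state \msa that reaches \assertinv in \pa and falsifies it, and let $c$ be the value of \absind in \msa. The goal is to exhibit a state \ms reaching \assertinv in \pc, corresponding to the array index $c$, on which the assertion expression evaluates to the same (false) value; since \pc is assumed safe, producing such an \ms is the contradiction we want.

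The heart of the argument is a \emph{converse} of the represents relation, restricted to the data that actually matters for \assertinv. Whereas Lemma~\ref{lemma:strucinduc.abs} shows only that the transformed states over-approximate \ms, here I would prove that, under rules $l_1$ through $d_6$, the transformed run reproduces the original values \emph{exactly} on \ImpVarset and \ImpExpset. Concretely, I would establish by structural induction (mirroring Lemma~\ref{lemma:strucinduc.abs}, but tracking exact agreement rather than $\rightsquigarrow$) that for the loop $S\!_\assertinv$ and every loop in \Defset, the state \msa reaching the end of the transformed body agrees with the state \ms at the end of the \emph{particular} original iteration in which the loop iterator equals $c$, on every variable of \ImpVarset and on the value of every expression of \ImpExpset. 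The precision rules are exactly what make each inductive step tight instead of merely over-approximate, so in place of Lemma~\ref{lemm:rhs1}'s containment I would obtain equality $\eval{\T(\Expr)}{\msa} = \eval{\Expr}{\ms}$ for the subexpressions feeding \assertinv.

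Each rule closes off one of the three sources of over-approximation in Figure~\ref{fig:transform}. Rule $l_1$ forces every relevant loop into case~\ref{transform.stmt.3}, so its body runs unconditionally with $i = \absind = c$; this is precisely the original iteration that touches $a[c]$, and it eliminates both the nondeterministic guard and the nondeterministic iterator of case~\ref{transform.stmt.4}. Rules $a_2$ and $d_5$ guarantee that every relevant array access has index expression equal to the iterator, so in case~\ref{transform.expr.2} the guard $(\Expr_1 == \absind)$ evaluates to true and the access resolves to \absarr rather than taking the $nd()$ branch. Finally, rules $a_3$, $s_4$, and $d_6$ keep \arrvar and every relevant scalar out of \loopdefs, so the inserted $u = nd()$ assignments before and after the transformed body never overwrite \absarr or any variable feeding \assertinv; those values are therefore pinned by the retained original assignments and coincide with what \pc computes for index $c$. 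Combining these with Lemma~\ref{lemm:rhs1} for the purely scalar subexpressions yields exact agreement of the full assertion expression, completing the inductive step.

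The main obstacle is carrying this exact-agreement invariant cleanly across \Defset, which is a transitive closure over data dependence and may interleave several loops and conditionals. One must argue that precision composes: a value written into \absarr or into a relevant scalar inside one loop of \Defset is itself computed exactly (again by the three arguments above), and then flows unperturbed into the next loop because the intervening $nd()$ resets touch only irrelevant variables. Control dependence adds a wrinkle, since I must also verify that the branch taken in \pa under $\T(\Expr)$ matches the branch taken in \pc; this follows because any such guard lies in \ImpExpset\ $\cup$ \ImpVarset and is hence evaluated exactly, but it has to be threaded through the induction explicitly. Once this invariant is in place the conclusion is immediate: \msa falsifies \assertinv, the constructed \ms agrees on its value, so \ms falsifies \assertinv as well, contradicting the assumed safety of \pc.
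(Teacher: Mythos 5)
Your proposal is correct and follows essentially the same route as the paper: both arguments identify the three sources of over-approximation introduced by the transformation (the nondeterministic guard and iterator of case~\ref{transform.stmt.4}, the $nd()$ branch of case~\ref{transform.expr.2}, and the $nd()$ resets for variables in \loopdefs) and show that rules $l_1$, $a_2$/$d_5$, and $a_3$/$s_4$/$d_6$ respectively eliminate each one. The paper presents this as a brief rule-by-rule checklist asserting that no nondeterministically chosen value reaches \assertinv, whereas you package the same observation as a contrapositive with an explicit exact-agreement induction (including composition across \Defset and control dependence); your version is more detailed, but the underlying argument is the same.
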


\begin{proof}
	The transformed program is over-approximative because our transformation rules (~\ref{transform.stmt.3},~\ref{transform.stmt.4},~\ref{transform.expr.2}) 
	introduce non-deterministically chosen values.
	We prove this theorem by showing that if assertion is of the form \assertinv then 
	none of these transformation rules introduce non-deterministically chosen values in the transformed program.

\begin{itemize}
	\item Since rule $l_1$ holds unconditionally, loops will be transformed as per~\ref{transform.stmt.3} and not~\ref{transform.stmt.4} and
	extra paths are not added. Also, the assignment $i=\absind$ at the start of the loop
	ensures that  the condition is true for each transformed array access expression. 
	\item When rule $a_2$ holds, rule $l_1$ also holds and $a[e]$ is always 
	replaced by \absarr (case~\ref{transform.expr.2}).
	\item When rule $a_3$ holds, assignment $\absarr=nd()$ is not added (case~\ref{transform.stmt.3}).
	\item When rule $s_4$ holds, assignment $\var=nd()$ is not added (case~\ref{transform.stmt.3}).
	\item When rule $d_5$ holds, rule $l_1$ holds and $a[e]$ in RHS is
		 replaced by \absarr (case~\ref{transform.expr.2}).
	\item When rule $d_6$ holds, scalars used in RHS are not assigned a non-deterministically chosen values.
	
\end{itemize}


\end{proof}

Rules $s_4$ and $d_6$ are strong requirements for ensuring that non-deterministically chosen values do not reach \assertinv.
We can relax these rules 
when $\var \in \loopdefs(S\!_\assertinv)$ when: 
\begin{itemize}
\item A definition of \var appears before the use of \var in RHS in the loop. 
\item Variable \var is defined with a constant value or using loop iterator $i$ only.
\end{itemize}

Since the original assignments to \var are retained in the transformed loop body, assignment of \var to non-deterministically chosen value ($\var = nd()$) 
gets re-defined.  
Also, when \var is defined with a constant or $i$ ($i=\absind$ is added for $S\!_\assertinv$),
its value is not over-approximated.

The assertion in Figure~\ref{Mex1} is \assertinv since it satisfies all the rules:
\begin{itemize}
	\item $S\!_\assertinv$ is the loop at line 14 containing the assertion. 
	$\Defset$ contains the loop at line 8. For both these loops, rule $l_1$ holds.
	\item \ImpExpset consists of the three array access expressions at line 16. 
	Rule $a_2$ and $a_3$ hold for all the three expressions. 
	\item \ImpVarset consist of the loop iterator \emph{i} hence rule $s_4$ holds.
	\item For assignments at line 11 and 12 of the lone loop in $\Defset$,
		\begin{compactitem}
		\item rule $d_5$ holds.
		\item \var is \emph{k} which is in \loopdefs(\Stmt). But we are using relaxed form of rule $d_6$ here. \emph{k} is defined 
		using loop iterator \emph{i} at line 10. Hence $d_6$ hold.
		\end{compactitem}
\end{itemize}

Since the assertion is \assertinv the transformed program in Figure~\ref{TEx1} is not over-approximated and hence BMC is able to prove the assertion.

\section{Optimization}
\label{opti}

We have optimized the transformations such that we model the behaviour of original program 
as precisely as possible without impacting the soundness of our approach.
\begin{itemize}
\item There can be several ways in which loops do not iterate over all the indices of an array. Depending on the way loop is written we can
	improve the transfomations instead of applying rule~\ref{transform.stmt.4}.
For instance, when a loop has \breakkw or \continuekw statement, we keep the loop header such that it iterates only once. 
Another instance is when the increment expression of a loop is incremented monotonically but not by 1.
We remove the loop header and the transformed loop body is controlled by a conditional statement, say $\emph{\ifkw(i\%3)}$ for increment expression $i+3$.
	In both the cases, we still retain additional assignments for variables in set returned by \loopdefs(\Stmt) and loop iterator.
\item We further gain precision by using range analysis~\cite{kumar2013precise}. This analysis returns the range of value (statically) 
	that a variable can take at a program point.
	This range is passed as a parameter to the function returning non-deterministic value. 
	Thus restricting the choice from the given range.
\end{itemize}

%

\section{Implementation}
\label{imple}
We have implemented our transformation engine using a static
analysis framework called PRISM developed at TRDDC, Pune~\cite{chimdyalwar2011effective,khare2011static}. 
Our implementation supports
\textsc{Ansi}-C programs with 1-dimensional arrays.
Structure are field sensitive: x.a and x.b are treated as separate entities.
Pointers are handled using flow sensitive and context insensitive points-to analysis
as implemented in PRISM.
Using static analysis we populate functions such as \fullarrayaccess(\Stmt) and \loopdefs(\Stmt).
We use information from these functions to transform the program. Range analysis provides information for \emph{assume} statements.

Our transformation engine takes the original C program as input and generates a valid C program without array and loop.
It adds proper declarations for witness variable and witness index and assume statements too.
\subsection{Handling of Multiple Arrays}
Though, we restricted our formalization to single array in a program, our tool can handle multiple arrays in a program with following enhancements:
\begin{itemize} 
\item Function $\fullarrayaccess(\Stmt)$ returns true when all the arrays accessed in the loop \Stmt are of \emph{same} sizes. This ensures that
all indices of each array is accessed. 
When multiple arrays with \emph{different} sizes are being read or written $\fullarrayaccess(\Stmt)$ conservatively returns false.
\item We also add an assignment at the start of the program where witness index of each array of \emph{same} size are equated
to receive same non-deterministically chosen value. 
This ensures that for a loop, where $\fullarrayaccess(\Stmt)$ returns true, same index of each array is considered in a run of transformed program.
\end{itemize}
\begin{figure*}[t]
\scriptsize
\begin{minipage}{0.45\textwidth}
\begin{lstlisting}[frame=single,language=c,escapeinside={(*@}{@*)},basicstyle=\ttfamily]
01. int a[100000],b[100000],
        c[100000];
02. int i,v;

03. main()
04. {
/* Loop1 */
05.  for(i = 0; i < SIZE; i++)
06.  {
07.   v = input();
08.   a[i]= v;
09.   v = input();
10.   b[i] = v;
11.  }
/* Loop2 */
12.  for(i = 0; i < 100000; i++)
13.  {
14.   c[i] = a[i] + b[i];
15.  }
/* Loop3 */
16.  for(i = 0; i < 100000; i++)
17.  {
18.   (*@ \color[rgb]{0.75,0.164,0.164}{assert}@*)(c[i] == a[i] + b[i]);
19.  }
20. }
\end{lstlisting}
\caption{Multiple Array - Program 1}
\label{Marr1}
\end{minipage}
\quad
\quad
\begin{minipage}{0.50\textwidth}
\begin{lstlisting}[frame=single,language=c,escapeinside={(*@}{@*)},basicstyle=\ttfamily]
01. int x_a,x_b,x_c;
02. int i_a,i_b,i_c;   
03. int i,v;

04. main()
05. {
06.   i_a=i_b=i_c=nd(0,100000);
/* Loop1 body */
07.   i=i_a; i=i_b;
08.   v = nd ();
09.   v = input();
10.   (i==i_a)? x_a= v : v;
11.   v = input();
12.   (i==i_b)? x_b= v : v;
13.   v = nd();
/* Loop2 body */
14.   i=i_a; i=i_b; i=i_c;
15.   (i==i_c)?x_c=((i==i_a)?x_a:nd()
      +(i==i_b)?x_b:nd()):
      ((i==i_a)?x_a:nd()
      +(i==i_b)?x_b:nd());
/* Loop3 body */
16.   i=i_a; i=i_b; i=i_c;
17.   (*@ \color[rgb]{0.75,0.164,0.164}{assert}@*)(i==i_c)?x_c
      ==((i==i_a)?x_a:nd()
      +(i==i_b)?x_b:nd())
      :((i==i_a)?x_a:nd()
      +(i==i_b)?x_b:nd()); 
18. }
\end{lstlisting} 
\caption{Transformed Program}
\label{TMarr1}
\end{minipage}
\end{figure*}

Consider the example in Figure~\ref{Marr1} with multiple arrays. 
In the first loop (lines 5-11), elements of array a and b are assigned with 
value provided by the user.
The second loop(line 12-15) assigns sum of each element of array a and array b to the corresponding element of array c. 
Third loop asserts that each index of array c contains the sum of elements at the same 
index of array a and b. This program is safe.

In the transformed program, shown in Figure~\ref{TMarr1}, 
witness index $i_a$, $i_b$ and $i_c$ are assigned with same non-deterministically chosen value at 
line 6 since array a, b and c are of same size.  
For all three loops \fullarrayaccess(\Stmt) hold,
hence loop iterator is assigned with witness index of the arrays accessed inside the loop 
(line 7, 14 and 16).
Other transformations are as explained in earlier sections.
BMC successfully verified the transformed program. 

\begin{figure*}[t]
\scriptsize
\begin{minipage}{0.45\textwidth}
\begin{lstlisting}[frame=single,language=c,escapeinside={(*@}{@*)},basicstyle=\ttfamily]
01. int i,x = 0,y = 0;
02. int a[100000],b[50000]; 

03. main()
04. { 
05.   x = user_input();
06.   y = user_input();
/* Loop1 */
07.   for(i = 0 ; i< 50000; i++)
08.   { 
09.     a[i] = x; 
10.     b[i] = y; 
11.     a[i+50000] = x*2;
12.     x++; y++;
13.   }
/* Loop2 */
14.   for(i=0; i < 50000; i++)
15.   {
16.     (*@ \color[rgb]{0.75,0.164,0.164}{assert}@*)(a[i]*2==a[i+50000]);
17.   }
18. }
\end{lstlisting}
\caption{Multiple Array - Program 2}
\label{Marr2}
\end{minipage}
\quad
\quad
\begin{minipage}{0.50\textwidth}
\begin{lstlisting}[frame=single,language=c,escapeinside={(*@}{@*)},basicstyle=\ttfamily]
01. int i_a, i_b;  
02. int i, x_a, x_b, x =0, y=0;

03. main()
04. {
05.   i_a = nd(0,99999);
06.   i_b = nd(0,49999);  
07.   x = user_input();
08.   y = user_input();
/* Loop1 body */
09.   if(nd(0,1))
10.   {	
11.     x = nd(); y = nd();
12.     i = nd(0,49999);
13.     (i==i_a)?x_a = x:x; 	
14.     (i==i_b)?x_b = y:y;
15.     (i+50000==i_a)?x_a=x*2:x*2;
16.     x++; y++;
17.     } 
18.    x = nd(); y = nd();
/* Loop2 body */
19.   i = nd(0,49999);
20.   (*@ \color[rgb]{0.75,0.164,0.164}{assert}@*)((i==i_x)?(x_a*2):nd()==
		(i+50000==i_x)?(x_a):nd());
21. }
\end{lstlisting} 
\caption{Transformed Program} 
\label{TMarr2}
\end{minipage}
\end{figure*}

The program in Figure~\ref{Marr2} has multiple arrays too but with different sizes.
In the transformed program as shown in Figure~\ref{TMarr2} 
witness index $i_a$ and $i_b$ are initialized at line 5 and 6 
to a non-deterministically chosen value 
from their respective range of the indices 
In \emph{Loop1} \fullarrayaccess(\Stmt) does not hold 
since different sized arrays are accessed.
Since the transformed program is over-approximated,  
BMC could not prove this assertion.
However, our approach for multiple arrays is sound in all scenarios. 

%

%


\section{Experimental Evaluation}
\label{exp}
The experiments are
performed on a 64-bit Linux machine with 16 Intel Xeon processors running at
2.4GHz, and 20GB of RAM. 
Although we could take
any off-the-shelf BMC for C program to verify the transformed code, we use CBMC
in our experiments as it is known to handle all the constructs of
\textsc{Ansi}-C.
We performed experiments under three setups- 
\begin{inparaenum}[\em (i)]
\item SV-COMP benchmarks
\item 3 large real-life applications
\item a set of small programs based on the patterns 
observed in these applications.	
\end{inparaenum}

\subsection{Setup 1 : SV-COMP Benchmarks}

\begin{table}[t]
\centering
\caption{Results on SV-COMP Benchmark Programs.}
\label{svcomp_table}
\begin{tabular}{|c|c|c|c|c|c|}
\hline
\#programs = 118 & \begin{tabular}[c]{@{}c@{}}\#correct\\ true\end{tabular} & \begin{tabular}[c]{@{}c@{}}\#correct \\ false\end{tabular} & \begin{tabular}[c]{@{}c@{}}\#incorrect \\ true\end{tabular} & \begin{tabular}[c]{@{}c@{}}\#incorrect \\ false\end{tabular} & \#no result \\ \hline \hline
Expected Results & 84 & 34 & - & - & 0 \\ \hline \hline
$CBMC_\alpha$ & 47 & 06 & 06 & 0 & 59 \\ \hline
$CBMC_\beta$ & 9 & 5 & 0 & 0 & 104 \\ \hline
Transformation+$CBMC_\beta$ & 25 & 34 & 0 & 59 & 0 \\ \hline \hline
\multicolumn{6}{|c|}{$CBMC_\alpha$ - SV-COMP2016 (unsound) CBMC, $CBMC_\beta$ - sound CBMC 5.4} \\ \hline
\end{tabular}
\end{table}

SV-COMP
benchmarks~\cite{svcompbench} contain an established set of programs under
various categories intended for comparing software verifiers.
Results for \emph{ArraysReach}\footnote{Programs in \emph{ArrayMemSafety}
access arrays without using index and cannot be transformed.} from the
\emph{array} category
for CBMC used in SV-COMP 2016 ($CBMC_\alpha$), CBMC 5.4 ($CBMC_\beta$) and CBMC
5.4 on transformed programs (Transformation+$CBMC_\beta$) are
consolidated\footnote{Case by case results available at \\
\htmladdnormallink{https://sites.google.com/site/datastructureabstraction/home/sv-comp-benchmark-evaluation-1}{https://sites.google.com/site/datastructureabstraction/home/sv-comp-benchmark-evaluation-1}}
in Table~\ref{svcomp_table}.
\emph{ArraysReach} has 118 programs. $CBMC_\alpha$, an unsound version of CBMC,
gave correct results for 53 programs. 
However, $CBMC_\beta$ gave correct results for 14 programs.
We compare the results of Transformation+$CBMC_\beta$ on three criteria:

\begin{compactitem}
\item Scalability: it scaled up for all 118 programs.
\item Soundness: it gave sound results for all 118 programs. For the 6
programs for which $CBMC_\alpha$ gave unsound results, our results are not only
sound but are also precise.
\item Precision: it gave precise results for 59 programs. Out of these
$CBMC_\alpha$ ran out of memory for 45 programs ($CBMC_\alpha$ ran out of
memory for 14 additional programs).  On the other hand, 22 true
programs reported correctly by $CBMC_\alpha$ were verified as false by
Transformation+$CBMC_\beta$.  Transformation+$CBMC_\beta$ verified 25 programs
as true which did not include 8 of the programs reported correctly as true by
$CBMC_\beta$.	
\end{compactitem}

Our technique is imprecise for the other 59 of 118 programs as they do not comply with
the characterization of precision provided in Section~\ref{sec::prec}.  As can
be seen, there is a trade-off between scalability and precision.  From the view
point of reliability of results, soundness is the most desirable property of a
verifier. Our technique satisfies this requirement. Further, it not only scales up  but is also precise 
in many situations implying its practical usefulness.

\subsection{Setup 2 : Real-life Applications}

\begin{table}[t]
\centering
\caption{Real-life Application Evaluation}
\label{RealApp_table}
\begin{tabular}{ccccccccccc}
\hline
\multicolumn{4}{|c|}{Application details} & \multicolumn{3}{c|}{Sliced+CBMC} & \multicolumn{3}{c|}{\begin{tabular}[c]{@{}c@{}}Sliced\\ +Transformation \\ +CBMC\end{tabular}} & \multicolumn{1}{c|}{\multirow{2}{*}{\begin{tabular}[c]{@{}c@{}}\%\\ False\\  Positive \\ Reduction\end{tabular}}} \\ \cline{1-10}
\multicolumn{1}{|c|}{Name} & \multicolumn{1}{c|}{\begin{tabular}[c]{@{}c@{}}Size\\ (LoC)\end{tabular}} & \multicolumn{1}{c|}{ $\% loop^{full}$} & \multicolumn{1}{c|}{\#Asserts} & \multicolumn{1}{c|}{\#P} & \multicolumn{1}{c|}{\#F} & \multicolumn{1}{c|}{\#T} & \multicolumn{1}{c|}{\#P} & \multicolumn{1}{c|}{\#F} & \multicolumn{1}{c|}{\#T} & \multicolumn{1}{c|}{} \\ \hline \hline
\multicolumn{1}{|c|}{\it{navi1}} & \multicolumn{1}{c|}{1.54M} & \multicolumn{1}{c|}{100} & \multicolumn{1}{c|}{63} & \multicolumn{1}{c|}{0} & \multicolumn{1}{c|}{0} & \multicolumn{1}{c|}{63} & \multicolumn{1}{c|}{52} & \multicolumn{1}{c|}{1} & \multicolumn{1}{c|}{10} & \multicolumn{1}{c|}{82.5} \\ \hline
\multicolumn{1}{|c|}{\it{navi2}} & \multicolumn{1}{c|}{3.3M} & \multicolumn{1}{c|}{93.4} & \multicolumn{1}{c|}{103} & \multicolumn{1}{c|}{0} & \multicolumn{1}{c|}{0} & \multicolumn{1}{c|}{103} & \multicolumn{1}{c|}{95} & \multicolumn{1}{c|}{1} & \multicolumn{1}{c|}{7} & \multicolumn{1}{c|}{92.2} \\ \hline
\multicolumn{1}{|c|}{icecast\_2.3.1} & \multicolumn{1}{c|}{336K} & \multicolumn{1}{c|}{59.1} & \multicolumn{1}{c|}{114} & \multicolumn{1}{c|}{0} & \multicolumn{1}{c|}{0} & \multicolumn{1}{c|}{114} & \multicolumn{1}{c|}{53} & \multicolumn{1}{c|}{61} & \multicolumn{1}{c|}{0} & \multicolumn{1}{c|}{46.5} \\ \hline \hline
\multicolumn{11}{|c|}{\begin{tabular}[c]{@{}c@{}} $loop^{full}$ - loop \Stmt where \fullarrayaccess(\Stmt) holds, \\ P - Assertion Proved, F - Assertion Failed, T - Timeout\end{tabular}} \\ \hline 

\end{tabular}
\end{table}

We applied our technique on 3 real-life applications - 
{\it navi1} and {\it navi2} are industry codes implementing the navigation
system of an automobile and icecast\_2.3.1 is an open source project for
streaming media~\cite{icecast}. To verify a meaningful property, we used \emph{Null
Pointer Dereference} (NPD) warnings generated by a sound static analysis tool build using PRISM\footnote{TCS
Embedded Code Analyzer (TCS ECA)\\
\htmladdnormallink{http://www.tcs.com/offerings/engineering\_services/Pages/TCS-Embedded-Code-Analyzer.aspx}{http://www.tcs.com/offerings/engineering\_services/Pages/TCS-Embedded-Code-Analyzer.aspx}}. 
PRISM performs weak updates for arrays (similar to array smashing) and hence generates a large number of warnings on arrays, most of which are false.
We appended assertions only for an array or an array member dereferencing as follows. 
Lets say the dereference expression is $*a[i].p$.  A
statement $assert(a[i].p!=null)$ is added in the code just before the statement
containing dereference expression. We then slice{\footnote{PRISM implements~\cite{Horwitz1990Inter} for slicing.}} these programs as per assertion.


Table~\ref{RealApp_table} shows the consolidated results of
our experiments. CBMC did not
scale on the sliced programs. However, after transformation (sliced+transformation+CBMC) CBMC proved 200 
out of 280 assertions, taking 12 minutes on average for transformation+verification. This
is much less in comparison to the time given to CBMC for sliced programs
(sliced+CBMC), which was 30 minutes. 

To verify the correctness of our implementation, we analyzed the warnings
manually. We found that all 280 warnings were false, implying that all the
assertions should have been proved successfully.

\begin{compactitem}
\item Scalability: CBMC could scale up for such large applications because there are no loops in transformed programs.
However, CBMC could not scale for 17 cases even after transformation because of long recursive call chain 
through function pointers. 
\item Precision: Our technique proved 200 assertions, where
all the conditions for precision mentioned in Section~\ref{sec::prec} get fulfilled. In all these cases since
$\fullarrayaccess(\Stmt)$ hold, the witness variable gets precise value in each run of BMC for each value of 
witness index. 
63 of the assertions could not be proved since array definitions reaching at the assertion were from the
loops where $\fullarrayaccess(\Stmt)$ did not hold and values received by witness variable is over-approximative. 
\end{compactitem}

Replacing array expressions with witness variable enables elimination of loops. 
However, by using witness index BMC simulates loop iterations with providing 
run-based value to witness variable.
Note that the number of false warnings
eliminated in an application is proportional to the number of loops for which
\fullarrayaccess(\Stmt) hold. Over a diverse set of applications, we found that our technique
could eliminate 40-90\% of false warnings. This is a significant value addition
to static analysis tools 
that try to find defects and end up generating a large
number of warnings. 

\subsection{Setup 3 : Our Benchmarks}

\begin{table}[]
\centering
\caption{Results of various BMCs on Benchmark Programs.}
\label{suite1_table}
\begin{tabular}{|c|c|c|c|c|c|c|c|c|c|c|c|}
\hline
\multirow{2}{*}{Name} & \multirow{2}{*}{\begin{tabular}[c]{@{}c@{}}Expected\\ Result\end{tabular}} & \multicolumn{2}{c|}{CBMC} & \multicolumn{2}{c|}{LLBMC} & \multicolumn{2}{c|}{\begin{tabular}[c]{@{}c@{}}SMACK+\\ CORRAL\end{tabular}} & \multicolumn{2}{c|}{ESBMC} & \multicolumn{2}{c|}{CPAChecker} \\ \cline{3-12} 
 &  & OP & TP & OP & TP & OP & TP & OP & TP & OP & TP \\ \hline
motivatingExample & S & O & S & T & S & T & S & O & S & T & S \\ \hline
Relation\_Mem & S & T & S & T & S & T & S & T & S & T & S \\ \hline
break\_safe & S & O & S & T & S & T & S & O & S & T & S \\ \hline
Nested\_Struct & S & T & S & T & S & T & S & O & S & T & S \\ \hline
partial\_lesser\_bound & S & T & U & T & U & O & U & T & U & T & U \\ \hline
\multicolumn{12}{|c|}{\begin{tabular}[c]{@{}c@{}}OP-Original Program, TP-Transformed Program \\ S-Safe U-Unsafe T-Time out O-Out of memory\end{tabular}} \\ \hline
\end{tabular}
\end{table}

      

Since the industry code cannot be shared, we created a set of programs based on the patterns observed in real-life applications.
These benchmark programs are provided at 
\htmladdnormallink{https://sites.google.com/site/datastructureabstraction/}{https://sites.google.com/site/datastructureabstraction/}.

We ran following model checkers that implement different techniques:
\begin{inparaenum}[\itshape(i)\upshape]
\item CBMC 5.4, a bounded model checker for C, 
\item LLBMC 2013.1, a low-level bounded model checker~\cite{llbmc}, and
\item Smack + Corral\footnote{SV-COMP 2015 winner in \emph{array} category} \cite{haran2015smack},
\item ESBMC, a context-bounded model checker\footnote {SV-COMP 2016 winner in \emph{array} category} \cite{esbmc}, and   
\item CPA Checker, a tool based on lazy abstraction and interpolation~\cite{cpachecker}.
\end{inparaenum}
Time out was set for 900 seconds. 

All 5 programs are safe. Table~\ref{suite1_table} summarizes the result of various model checkers on the original and the transformed programs.
None of the model checkers scaled on the original programs, while 
they scaled-up for all the transformed programs. Out of 5, 4 transformed programs are reported as safe while 1 is reported as unsafe.
The 5 programs reported as safe comply to the precision criteria mentioned in Section~\ref{sec::prec}, specially that the 
\fullarrayaccess(\Stmt) hold for all the loops. While the 1 program is reported as unsafe due 
to over-approximations introduced since \fullarrayaccess(\Stmt) did not hold.

\section{Related Work}
\label{relwork}

There is a sizeable body of work on reasoning about values of array elements.
We give a brief overview of the relevant literature here.

Blanchet et al. proposed the use of abstract elements for array
indices~\cite{blanchet2002design}. 
Though of limited practicability, they
suggested two approaches - one that uses an abstract element for each index in
the array i.e. {\it array expansion}, and another where a single abstract
element is used for the entire array i.e. {\it array smashing}. Array expansion
is precise but it cannot be used effectively for large arrays. Array
smashing, on the other hand, allows handling arbitrary arrays efficiently, but
suffers significant precision losses due to its inability to perform strong
updates. As a result, it cannot be applied to prove the correctness of our
example in Figure~\ref{Mex1}.

While expansion and smashing are at two extremes in their use of abstract
elements, there are several ``midway'' approaches that are aimed at combining
the benefits of both. Gopan et al.~\cite{gopan2005framework} proposed that
those elements that are read or written could be dynamically expanded to
incorporate strong updates, while the remaining elements could be smashed. This
was extended to partitioning arrays into symbolic intervals spanning the entire
array, and using an abstract variable to represent each such
interval~\cite{halbwachs2008discovering}. To avoid the exponential
multiplication of array slices that prevented these techniques from scaling,
Cousot et al. proposed an improvement~\cite{cousot2011parametric} that
automatically, semantically divides arrays into consecutive non-overlapping
segments. {\tt Clousot}~\cite{clousotTool}, a tool that implements this
improvement, indeed scales better than
\cite{gopan2005framework,halbwachs2008discovering} but still runs out of
resources while verifying our example in Figure~\ref{Mex1}.

Another way of partitioning an array is to exploit its semantic properties
and split the elements into groups. Grouping array cells of similar properties,
e.g.~\cite{liu2014abstraction}, has the advantage that it allows partitions to
be non-contiguous. This is orthogonal to the work of Dillig et
al.~\cite{dillig2010fluid} where they introduce fluid updates of arrays using
indexed locations along with bracketing constraints, to specify the concrete
elements being updated. Cornish et al.~\cite{cornish2014analyzing} apply a
program-to-program translation over the LLVM intermediate representation,
followed by a scalar analysis. Although the abstraction in these approaches is
expressible as a composition of our abstraction followed by further
abstraction, our implementation of $\fullarrayaccess(\Stmt)$ guarantees an
array-free and loop-free programs whenever possible. Moreover, we exploit the
power of model checking to obtain a precise path-sensitive analysis. 

In a recent work~\cite{monniaux2016cell}, Monniaux et al. proposed to
convert array programs into array-less non-linear Horn clauses. The precision
of this transformation is adjustable through a Galois connection parameterized
by the number of distinguished cells, which, however, needs to be decided
manually. An analogous technique, based on Horn clauses over array variables
that requires user inputs in form of pre- and post-conditions, is that of
\cite{de2015rule} proposed by De Angelis et al. This is practically infeasible
for real-life programs, and therefore affects scalability. Another limitation
of these approaches is their back-end solvers which cannot handle non-linear
arithmetic. This reflects in their inability to work for our motivating example
of Figure~\ref{Mex1}. Besides, unlike these approaches, our technique needs no manual
input and successfully scales to large industry code.
 
Template-based methods~\cite{beyer2007invariant,Gulwani2008Lifting} have been
very useful in synthesizing invariants but these techniques are ultimately
limited by the space of possible templates that must be searched for a good
candidate. This has led to semi-automatic approaches, such
as~\cite{flanagan2002predicate}, where the predicates are usually suggested by
the user. Our approach, in contrast, is fully automatic and proves safety by
solving a bounded model checking instance instead of computing an invariant
explicitly.

Verification tools based on CEGAR have been applied successfully to certain
classes of programs, e.g. device drivers~\cite{ball2002s}. However, this
technique is orthogonal to ours. In fact, a refinement framework in addition to
our abstraction would make our technique complete too. Several other techniques
have been used to scale bounded model checkers to tackle complex, real-world
programs such as acceleration~\cite{kroening2013under} and
loop-abstraction~\cite{darke2015over}. But these techniques are not shown to be
beneficial in abstracting complex data structures.
Booster~\cite{alberti2014booster} integrates acceleration and lazy abstraction
with interpolants for arrays. However, there are syntactic restrictions that
limit the applicability of acceleration in general for programs handling
arrays. Also, interpolation over array properties is difficult, especially
since the goal is not to provide any interpolant, but one that generalizes well
to an invariant~\cite{alberti2014extension,alberti2015polyhedra}.

\section{Conclusions and Future Work}
\label{conclusion}

Verification of programs with loops iterating over arrays is a challenging problem because of large sizes of arrays.
We have explored a middle ground between the two
extremes of relying completely on dynamic approaches of using model checkers on the one hand and using completely
static analysis involving complex domains and fix point computations on the other hand. Our experience shows that using static analysis
to transform the program and letting the model checkers do the rest is a sweet spot that enables verification of
properties of arrays using an automatic technique that is generic, sound, scalable, and reasonably precise.

Our experiments show that the effectiveness of our technique depends on the characteristics of programs and
properties sought to be verified.
We are able to eliminate 40-90\% of false warnings from
diverse applications.
This is a significant value addition to static analysis that try to find
defects and end up generating a large number of warnings which need to be resolved manually for safety critical applications. 
Our effort grew out of our own experience of such manual reviews which showed
a large number of warnings to be false positives.

We plan to make our technique more precise by augmenting it with a refinement step to verify the programs that are 
reported as unsafe by our current technique. 
Finally, we wish to extend our technique on other data structures such as maps or lists. 


\Urlmuskip= 0mu plus 2mu\relax
\bibliographystyle{abbrv}
\bibliography{refs}

\end{document}